\documentclass[11pt]{article}
\usepackage{fullpage}
\usepackage{amssymb,amsmath,amsfonts}
\usepackage{graphicx}
\usepackage{amsmath}
\usepackage{ifpdf}
\ifpdf\setlength{\pdfpagewidth}{8.5in}\setlength{\pdfpageheight}{11in}\fi
\usepackage{subfigure}
\usepackage{amsmath,amsthm, amssymb,color,graphicx,epsf,verbatim}

\newcommand{\para}[1]{\medskip \noindent {\bf #1}}

\newlength{\figwidth}
\setlength{\figwidth}{0.34\textwidth}
\newlength{\twowidth}
\setlength{\twowidth}{0.45\textwidth}

\usepackage{algorithm}
\usepackage{algorithmicx}
\usepackage[noend]{algpseudocode}
\usepackage{times}

\newtheorem{example}{Example}

\newcommand{\eat}[1]{}
\begin{document}

\author{
 Edith Cohen,\, Graham Cormode,\, Nick Duffield \\
{AT\&T Labs--Research}\\
{180 Park Avenue}\\
{Florham Park, NJ 07932, USA}\\
{\{edith,graham,duffield\}@research.att.com}}

  \eat{
\alignauthor Edith Cohen\\
       \affaddr{AT\&T Labs--Research}\\
       \affaddr{180 Park Avenue}\\
       \affaddr{Florham Park, NJ 07932, USA}\\
       \email{edith@research.att.com}
\alignauthor  Graham Cormode\\
       \affaddr{AT\&T Labs--Research}\\
       \affaddr{180 Park Avenue}\\
       \affaddr{Florham Park, NJ 07932, USA}\\
       \email{graham@research.att.com}
\alignauthor Nick Duffield\\
       \affaddr{AT\&T Labs--Research}\\
       \affaddr{180 Park Avenue}\\
       \affaddr{Florham Park, NJ 07932, USA}\\
       \email{duffield@research.att.com}
   }

\def\varopt{\textnormal{\sc VarOpt}}
\def\poly{\textnormal{poly}}
\def\WBS{\textnormal{\sc WB}}
\def\HT{\textnormal{\sc HT}}
\newcommand\E{\textsf{E}}
\newcommand{\var}{\mathop{\sf Var}}
\def\Lvaropt{\textnormal{\sc Lvaropt}} 

\newtheorem{theorem}{Theorem}
\newtheorem{lemma}[theorem]{Lemma} 
\newtheorem{definition}[theorem]{Definition} 
\newcommand{\VS}{{V\Sigma}}
\newcommand{\SV}{{\Sigma}V}
\newcommand{\LCA}{\textnormal{\sc LCA}}
\def\qd{Q-digest}
\def\cR{\mathcal{R}}
\def\cL{\mathcal{L}}
\def\cK{\mathcal{K}}
\newcommand{\ignore}[1]{}
\newcommand{\onlyinproc}[1]{#1}
\newcommand{\notinproc}[1]{}

\title{Structure-Aware Sampling: \\ Flexible and Accurate Summarization}
 \maketitle

\begin{abstract}
In processing large quantities of data, a fundamental problem is to
obtain a summary which supports approximate query answering.  Random
sampling yields flexible summaries which naturally support subset-sum
queries with unbiased estimators and well-understood confidence
bounds.

Classic sample-based summaries, however, are designed for arbitrary
subset queries and are oblivious to the structure in the set of keys.
The particular structure, such as hierarchy, order, or product space
(multi-dimensional), makes {\em range queries} much more relevant for
most analysis of the data.

Dedicated summarization algorithms for range-sum queries have also been
extensively studied.  They can outperform existing sampling schemes in
terms of accuracy on range queries per summary size. Their accuracy,
however, rapidly degrades when, as is often the case, the query spans
multiple ranges.  
They are also less flexible---being targeted for range sum
queries alone---and are often quite costly to build and use.

In this paper we propose and evaluate variance optimal sampling
schemes that are {\em structure-aware}.  
These summaries improve over 
the accuracy of existing {\em structure-oblivious} sampling schemes on
range queries while retaining the benefits of sample-based summaries:
flexible summaries, with high accuracy on 
both range queries and arbitrary subset queries.
\end{abstract}

\section{Introduction}
Consider a scenario where a large volume of data is collected on a
daily basis: for example, sales records in a retailer, or network
activity in a telecoms company. 
This activity will be archived in a warehouse or other storage
mechanism, but the size of the data is too large for data analysts 
to keep in memory. 
Rather than go out to the full archive for every query, it is natural
to retain accurate summaries of each data table, and use these queries
for data exploration and analysis, reducing the need to read through
the full history for each query. 
Since there can be many tables (say, one for every day at each store,
in the retailer case, or one for every hour and every router in the
network case), we want to keep a very compact summary of each table,
but still guarantee accurate answers to any query. 
The summary allows approximate processing of queries, 
in place of the original data (which may be slow to access or even no
longer available); it also allows fast `previews' of computations
which are slow or resource hungry to perform exactly.

\begin{example}
\label{eg:motivate}
As a motivating example, consider network data in the form
of IP flow records.
Each record has a source and destination IP address, a port number, and
size (number of bytes).  
IP addresses form a natural hierarchy, where prefixes or sets of prefixes 
define the ranges of interest.  
Port numbers indicate the generating application, and 
related applications use ranges of port numbers. 
Flow summaries are used for many network management tasks,
including planning routing strategies, and traffic anomaly detection.
Typical ad hoc analysis tasks may involve 
 estimating the amount of traffic between different
subnetworks, or the fraction of VoIP traffic on a certain network. 
Resources for collection, transport, storage and analysis of network
measurements are expensive; therefore, 
structure-aware summaries are needed by network operators to
understand the behavior of their network. 
\end{example}

Such scenarios have motivated a wealth of work on data summarization and
approximation.
There are two main themes: 
methods based on random sampling, and algorithms that build more complex
summaries (often deterministic, but also randomized). 
Both have their pros and cons.
Sampling is fast and efficient, and has
useful guaranteed properties.
Dedicated summaries can offer
greater accuracy for the kind of range queries which are most common
over large data, albeit at a greater cost to compute, and providing
less flexibility for other query types. 
Our goal in this work is to provide summaries which combine the best
of both worlds: fast, flexible summaries which are very accurate for
the all-important range queries. 
To attain this goal, we must understand existing methods in detail to
see how to improve on their properties. 

Summaries which are based on random sampling 
allow us to build (unbiased) estimates of properties of the data set, 
such as counts of individual identifiers (``keys''), sums of weights
for particular subsets of keys, and so on, all specified after the
data has been seen. 
Having high-quality estimates of these primitives allows us to
implement higher-level applications over samples, such as
computing order statistics over subsets of the data, heavy hitters
detection, longitudinal studies of trends and correlations, and so
on. 

Summarization of items with weights  traditionally uses Poisson
sampling, where each item is sampled independently. 
The approach which sets the probability of including an item in the
sample to be proportional to its weight (IPPS)~\cite{Hajekbook1981}
enables us to use the Horvitz-Thompson estimator~\cite{HT52}, 
which minimizes the sum of per-item variances.
``\varopt'' samples~\cite{Cha82,Tille:book,varopt:CDKLT08}
improve on Poisson samples in that the sample size is fixed and 
they are more accurate on subset-sum queries. 
In particular \varopt\ samples have 
{\em variance optimality}: they achieve variance over the
queries that is provably the smallest possible for any sample of that
size. 

Since sampling is simple to implement and flexible to use, it is
 the default summarization method for large data sets. 
Samples support a rich class of possible queries directly, such as
those mentioned in Example~\ref{eg:motivate}:
evaluating the query over the sampled data (with appropriately
scaled weights) usually provides an unbiased, low variance estimate of
the true answer, while not requiring any new code to be written. 
These summaries provide not only estimates of aggregate
values but also a representative sample of keys that satisfy a
selection criteria.  
The fact that estimates are unbiased also means that relative
error decreases for queries that span multiple
samples or larger subsets and the estimation error is governed by
exponential tail bounds:  the estimation error, in terms of the number
of samples from any particular subset, is highly concentrated around
the square root of the expectation.

We observe, however, that traditionally sampling
has neglected the inherent structure that is present,
 and which is known before the data is observed. 
That is, data typically exists within a well-understood schema 
that exhibits considerable structure. 
Common structures include {\em order} where there is a natural ordering
over keys; {\em hierarchy} where keys are leaves within a hierarchy
(e.g. geographic hierarchy, network hierarchy); and
combinations of these where keys are multi-dimensional points in a 
{\em product structure}.
Over such data, queries are often {\em structure-respecting}.  
For example, on ordered data with $n$ possible key-values, 
although there are $O(2^n)$ possible subset-sum queries, 
the most relevant queries may be the $O(n^2)$ possible range queries. 
In a hierarchy, relevant queries may correspond to particular nodes 
in the hierarchy
(geographic areas, IP address prefixes), which represent
$O(n\log n)$ possible ranges. In a product structure, likely queries are 
boxes---intersections of ranges of each dimension. 
This is observed in Example~\ref{eg:motivate}: the queries mentioned
are based on the network hierarchy. 

While samples have been shown to work very well for queries which
resemble the sums of {\em arbitrary} subsets of keys, 
they tend to be less satisfactory when restricted to range queries. 
Given the same summary size, samples can be out-performed in
accuracy by dedicated methods such as 
(multi-dimensional) histograms~\cite{GKTD:sigmod2000,PoosalaIoannidis:VLDB97,LKC:sigmod1999},
wavelet
transforms~\cite{Matias98wavelet-basedhistograms,Vitter98datacube},
and geometric
summaries~\cite{SuriTZ:CG2004,HSST:ISAAC04,BCEG:ToA2007,HHH:CKMS:TKDD2008,ZSSDC:IMC04}
including the popular 
Q-digest~\cite{ShrivastavaBAS:sensys04}.

These dedicated summaries, however, have inherent drawbacks: they
primarily support queries that are sum aggregates over the original
weights, 
and so other queries must be expressed in terms of this primitive. 
Their accuracy rapidly degrades when the query spans multiple
ranges---a limitation since natural queries 
may span several (time, geographic) ranges within the same summary
and across multiple summaries.
Dedicated summaries do not provide ``representative'' keys of selected
subsets, and require changes to existing code to utilize.  
Of most concern is that they can be very slow to compute, requiring a
lot of I/O (especially as the dimensionality of the data grows): 
a method which gives a highly accurate summary of each hour's data
is of little use if it takes a day to build!
Lastly, the quality of the summary may rely on certain structure being
present in the data, which is not always the case. 
While these summaries have shown their value in efficiently
summarizing one-dimensional data (essentially, arrays of counts),
their behavior on even two-dimensional data is less satisfying:
troubling since this is where accurate summaries are most needed. 
For example, in the network data example, we are often interested in
the traffic volume between (collections of) 
various source and destination ranges.

\ignore{
In the context of the IP flow example, 
deterministic summaries taken with respect
to source IP addresses work well for capturing the volume under all
sufficiently large prefixes.  
But if we are interested in a union of prefixes 
(say, associated with a certain geographic location or type of
customer) 
or in gleaning from daily summaries the total monthly volume of a
certain prefix, 
errors add up and the relative accuracy deteriorates. 
In contrast, with an unbiased sample-based summaries estimates 
the relative error can decrease.  
For similar reasons, in estimating the total amount of traffic associated
with low-volume addresses, deterministic summaries perform poorly with
no useful guarantee, while the (unbiased) sample promises a good
estimate if the total volume is large enough. 
}

Motivated by the limitations of dedicated summaries, 
and the potential
for improvement over existing (structure-oblivious) sampling schemes, 
we aim to design sampling schemes that are both \varopt\ and
{\em structure-aware}.
At the same time, we aim to match the accuracy of deterministic summaries on
range sum queries and retain the desirable 
properties of existing sample-based summaries: unbiasedness, tail
bounds on arbitrary subset-sums, flexibility and support for
representative samples, and good I/O performance. 

\subsection{Our Contributions}
\noindent
We introduce a novel algorithmic sampling framework, 
which we refer to as {\em probabilistic aggregation}, 
for deriving \varopt\ samples.   
This framework makes explicit the freedom of choice in building a
\varopt\ summary which has previously been overlooked. 
Working within this framework, we design {\em structure-aware} 
 \varopt\ sampling schemes which exploit this freedom 
 to be much more accurate on ranges than 
 their structure-oblivious counterparts.

\begin{list}{\labelitemi}{\leftmargin=1em}
\item
For hierarchies, we design an efficient algorithm that constructs
\varopt\ summaries with bounded ``range discrepancy''.
That is, for any range, the number of samples deviates 
from the expectation by less than $1$.  
This scheme has the minimum possible variance on ranges of any
unbiased sample-based summary. 
\item
For ordered sets, where the ranges consist of all intervals, 
we provide a sampling algorithm which builds a \varopt\ summary 
with range discrepancy less than $2$. 
We prove that this is the best possible for any \varopt\ sample.
\item 
For $d$-dimensional datasets, we propose sampling algorithms where
the discrepancy between  $p(R)$, the expected number of 
sample points in the range $R$, and the actual number is 
$O(\min\{d s^{\frac{d-1}{2d}},$ $
\sqrt{p(R)}\})$, 
where $s$ is the sample size.
\end{list}
This improves over structure-oblivious random sampling, 
where the corresponding discrepancy is $O(\sqrt{p(R)})$.

Discrepany corresponds to error of range-sum queries but
sampling has an advantage over other summaries
with similar error bounds:
The error on queries $Q$ which span multiple ranges grows linearly with the
number of ranges for other summaries but has square root dependence for samples. Moreover, for samples the expected error never exceeds $O(\sqrt{p(Q)})$ (in expectation) regardless of the number of ranges.

\smallskip
\noindent
{\bf Construction Cost.}
For a summary structure to be effective, it must be possible to
construct quickly, and with small space requirements. 
Our main-memory sampling algorithms perform tasks such as sorting
 keys or (for multidimensional data) building a kd-tree.  
We propose even cheaper alternatives which perform two read-only
 passes over the dataset using memory that depends on the desired
 summary size $s$ (and independent of the size of the data set).  When
 the available memory is $O(s\log s)$, 
we obtain a \varopt\ sample that with high probability 
$1-O(1/\poly s)$ is close in quality to the algorithms which store and
manipulate the full data set. 

\smallskip
\noindent
{\bf Empirical study.}  
To demonstrate the value of our new
  structure-aware sampling algorithms, 
we perform experiments comparing to popular 
  summaries, in particular the wavelet
  transform~\cite{Vitter98datacube}, 
  $\epsilon$-approximations~\cite{HSST:ISAAC04}, 
  randomized sketches~\cite{ccf:icalp2002} 
and to structure-oblivious random sampling.  
These experiments show that it is possible to have the best of both
worlds: summaries with equal or better accuracy than the
best-in-class, which are flexible and dramatically more efficient to construct and
work with.

\section{Probabilistic aggregation}   \label{probagg:subsec}
\onlyinproc{
This section introduces the ``probabilistic aggregation'' technique. 
For more background, see the review  of core concepts from
sampling and summarization in Appendix~\ref{prelim:sec}.

Our data is modeled as a set of (key, weight) pairs:
each key $i\in K$ has weight $w_i\geq 0$. 
A sample is a random subset $S\subset K$.  
A sampling scheme is IPPS when, for expected sample size $s$ and
derived threshold $\tau_s$, the sample includes key $i$
with probability $\min\{w_i/\tau_s,1\}$.  
IPPS can be acheived with  
  {\em Poisson} sampling (by including keys independently) 
or \varopt\ sampling, which allows correlations between key inclusions
to achieve improved variance and fixed sample size of exactly $s$. 
There is not a unique \varopt\ sampling scheme, but rather
there is a large family of \varopt\ sampling distributions:
 the well-known ``reservoir sampling'' is a
special case of \varopt\ on a data stream with uniform weights.   
Classic tail bounds, including Chernoff bounds, apply both to
\varopt\ and Poisson sampling.  

 Structure is specified as a {\em range space} $(\cK,\cR)$ with $\cK$ being the key domain and {\em ranges} $\cR$ that are subsets of $\cK$.
 The {\em discrepancy} $\Delta(S,R)$ of a sample $S$ on a range $R\in \cR$ 
is the difference between the number of sampled keys $S\cap R$ and its expectation $p(R)$.   We use $\Delta$ to denote the maximum discrepancy over all ranges $\cR$.
Disrepancy $\Delta$ means that the error of range-sum queries is at most $\Delta \tau_s$.  If a sample is Poisson or \varopt, it follows from Chernoff bounds that the expected discrepancy is $O(\sqrt{p(R)})$ and (from bounded VC dimension of our range spaces) that the maximum range discrepancy is $O(\sqrt{s\log s})$ with probability $O(1-\poly(1/s))$.  With structure-aware sampling, we aim for much lower discrepancy.
}

\para{Defining probabilistic aggregation.}
Let  $p$ be the vector of sampling probabilities.
We can view a sampling scheme that picks a set of keys $S$ as
operating on $p$.   
Vector $p$ is incrementally modified: setting $p_i$ to 1 means
$i$ is included in the sample, while $p_i=0$ means it is
omitted. 
When all entries are set to 0 or 1, the sample is chosen (e.g. Poisson
sampling independently sets each entry to 1 with probability $p_i$). 
To ensure a \varopt\ sample,
the current vector $p'$ must be a {\em
  probabilistic aggregate} of the original $p$.

\ignore{
The first step in our approach to making sampling structure aware is
to define a relaxation of \varopt\ sampling which 
allows for {\em partial} inclusions of keys.
These ``partial'' 
\varopt\ samples, which we refer to as {\em probabilistic aggregates}, 
can always be completed to generate a \varopt\ sample of the original
data.
The input and output are thought of as vectors of (sampling) probabilities.
}
 
A random vector $p^{(1)}\in [0,1]^n$ 
is a {\em probabilistic aggregate} of a vector $p^{(0)}\in [0,1]^n$ if
the following conditions are satisfied:
\begin{itemize}
\item [(i)] ({\it Agreement in Expectation})
$\forall i, \E[p^{(1)}_i]=p^{(0)}_i$, 
\item [(ii)] ({\it Agreement in Sum})
$\sum_i p^{(1)}_i = \sum_i p^{(0)}_i$, and
\item [(iii)] ({\it Inclusion-Exclusion Bounds})
\begin{align*}
\mbox{(I):} && \E[\prod_{i\in J} p^{(1)}_i]  & \quad \leq \quad \prod_{i\in J} p^{(0)}_i \\
\mbox{(E):} && \E[\prod_{i\in J} (1-p^{(1)}_i)] & \quad \leq \quad \prod_{i\in J} (1-p^{(0)}_i )\ .
\end{align*}
\end{itemize}

\begin{algorithm}[t]
\caption{{\sc Pair-Aggregate}$(p,i,j)$ }\label{pairagg:alg}
\begin{algorithmic}[1]
\Require $0  < p_i,p_j < 1$
\If {$p_i+p_j< 1$}
\State {\bf if} {$rand()< \frac{p_i}{p_i+p_j}$}
  {\bf then}
 $p_i \gets p_i+p_j$;  $p_j \gets 0$
\State{\bf else } $p_j \gets p_i+p_j$;  $p_i \gets 0$
\Else \Comment{$p_i+p_j\geq 1$}
\State{\bf if} {$rand() < \frac{1-p_j}{2-p_i-p_j}$}
{\bf then} $p_i \gets 1$;  $p_j \gets p_i+p_j-1$ 
\State{\bf else} $p_i \gets p_i+p_j-1$; $p_j \gets 1$
 \Comment{w/prob $\frac{1-p_i}{2-p_i-p_j}$}
\EndIf
\State \textbf{return} $p$
\end{algorithmic}
\end{algorithm}

We obtain \varopt\ samples by performing a sequence of
probabilistic aggregations, each setting
at least one of the probabilities to 1 or 0. 
In Appendix~\ref{probaggapp} we show that probablistic aggregations
are transitive and that set entries remain set.  Thus, such a process
must terminate with a \varopt\ sample. 


\smallskip
\noindent
{\bf Pair Aggregation.}
Our summarization algorithms perform
a sequence of simple aggregation steps which we refer to as
{\em pair aggregations} (Algorithm~\ref{pairagg:alg}).
Each pair aggregation step modifies only two entries 
and sets at least one of them to $\{0,1\}$.  
The input to {\em pair aggregation} is a
vector $p$ and a pair $i,j$ with each $p_i,p_j\in (0,1)$.  The output
vector agrees with $p$ on all entries except $i,j$ and one of the
entries $i,j$ is set to $0$ or $1$.  It is not hard to verify, separately considering cases $p_i+p_j<1$ and $p_i+p_j\geq 1$,
that {\sc Pair-Aggregate}$(p,i,j)$ correctly computes 
a probabilistic aggregate of its input, and hence the sample is \varopt.

Pair aggregation is a powerful primitive.
It produces a sample of size exactly $s=\sum_i p^{(0)}_i$ 
\footnote{Assuming that 
 $\sum_i p^{(0)}_i$ is integral.  
This can be ensured (deterministically) by choosing $\tau$ as
described in Algorithm \ref{get_tau_k:alg}.}.
Observe that the choice of which pair $i,j$ to aggregate at any point
can be arbitrary---and the result is still a \varopt\ sample.  
This observation is what enables our approach.   
We harness this freedom in pair selection
to obtain \varopt\ samples that are structure
aware:  
Intuitively, by choosing to aggregate pairs that are
 ``close'' to each other with respect to the structure, 
we control the range impact of the ``movement'' of probability mass.

\section{One dimensional structures} \label{onedim:sec}

We use pair aggregation to make sampling structure-aware by describing
ways to pick which pair of items to aggregate at each step. 
For now, we assume  the data fits in main-memory, 
and our input is the list of keys and their associated IPPS
probabilities $p_i$.
We later
discuss the case when the data exceeds the available memory.

For hierarchy structures
 (keys $\cK$ are associated with leaves of a tree and $\cR$ contains
all sets of keys under some internal node) we show how to obtain
\varopt\ samples with (optimal) maximum range discrepancy $\Delta <1$.
There are two special cases of hierarchies:
   (i) {\em disjoint ranges} (where
  $\cR$ is a partition of $\cK$)---captured by a flat 2-level 
hierarchy with parent nodes corresponding to ranges and
(ii)~{\em order} where there is a linear order on keys and 
$\cR$ is the set of all prefixes---the corresponding 
hierarchy is a path with single leaf below each internal node.
For order structures  where $\cR$ is the
set of ``intervals'' (all consecutive sets of keys) we show that there
is always a \varopt\ sample with maximum range discrepancy $\Delta<2$ and
prove that this is the best possible.

\notinproc{
\subsection{Disjoint ranges and hierarchy} 

\noindent
We specify pair selection criteria which result in a range-optimal summary:
}

\label{drandhierarchy:sec}
\begin{list}{\labelitemi}{\leftmargin=1em}
\item
{\bf Disjoint ranges:}
Pair selection picks pairs where both keys belong to the same range $R$.  
When there are multiple choices, we may choose one arbitrarily.
Only when there are none do we
select a pair 
that spans two different ranges (arbitrarily if there are multiple
choices).

\item 
{\bf Hierarchy:}
   Pair selection picks pairs with lowest $\LCA$ (lowest common ancestor).  
That is, we pair aggregate $(i,j)$ if there are no other pairs
 with an $\LCA$ that is a descendant of $\LCA(i,j)$.  
\end{list}

\begin{figure}[t]
\centering
\includegraphics[width=0.9\columnwidth]{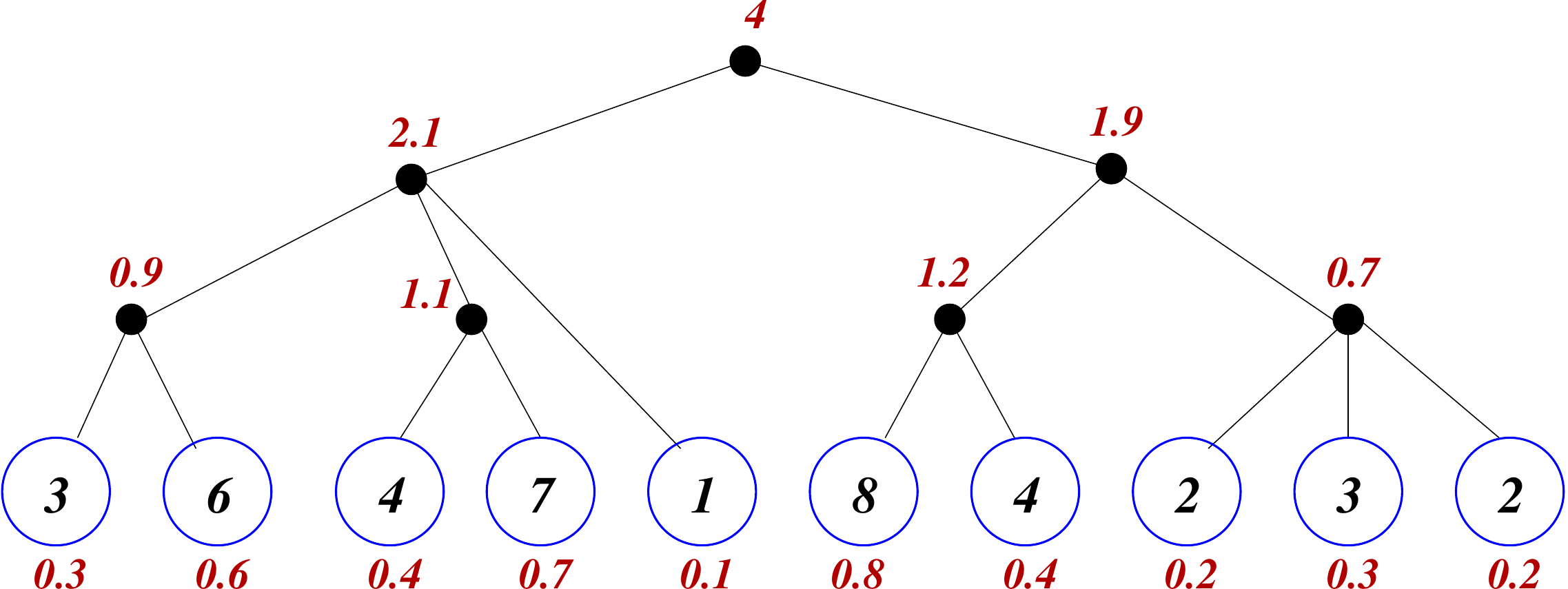}
\vspace{0.2in}

\centering
\begin{tabular}{c|llllllllll}
leaf & 1 & 2 & 3 & 4 & 5 & 6 & 7 & 8 & 9 & 10 \\
\hline
IPPS & 0.3 & 0.6 & 0.4 & 0.7 & 0.1 & 0.8 & 0.4 & 0.2 & 0.3 & 0.2 \\
\hline
(1)+(2),(3)+(4), & 0 & 0.9 & 1 & 0.1 & 0.1 & 0.2 & 1 & 0.5 & 0 & 0.2 \\
(6)+(7),(8)+(9) & \multicolumn{10}{c}{} \\
(2)+(4), (8)+(10) &  0 & 1 & 1 & 0 & 0.1 & 0.2 & 1 & 0 & 0 & 0.7 \\
(6)+(10) & 0 & 1 & 1 & 0 & 0.1 & 0 & 1 & 0 & 0 & 0.9 \\ 
(5)+(10) & 0 & 1 & 1 & 0 & 0 & 0 & 1 & 0 & 0 & 1 
\end{tabular}
\caption{Sampling over a hierarchy structure}
\label{probaggH:fig}
\end{figure}

Following these rules guarantees low range discrepancy:
they ensure that
for all ranges $R\in\cR$ and for all iterations $h$ where $R$ has at 
least one entry which is not set, we have 
$\sum_{i\in R} p^{(h)}_i \equiv \sum_{i\in R} p^{(0)}_i$.
So,
 at termination, 
when all entries in $R$ are set:
\[\textstyle |S\cap R| \in \{\lfloor \sum_{i\in R} p^{(0)}_i \rfloor, \lceil
\sum_{i\in R} p^{(0)}_i \rceil \}.\] 
Hence, the maximum range discrepancy is $\Delta < 1$. 
\onlyinproc{
In Appendix~\ref{mrangeH:sec} we bound the discrepancy of multi-range queries.}

\begin{example}
Figure~\ref{probaggH:fig} demonstrates sampling when the key domain is a hierarchy.  The tree structure shown is induced by 
keys in the data set (those with positive weights).  Each key corresponds to
a leaf node which shows its weight and IPPS sampling probability for sample size $s=4$. Each internal tree node shows the expected number of samples under it.

All \varopt\ samples include exactly $4$ keys but if not
structure aware then the number of samples under internal nodes 
can significantly deviate from their expectation.
A Poisson IPPS sample has 4 keys in expectation, and is 
oblivious to structure.

The table shows a sequence of pairwise aggregations which follows
the hierarchy pair selection rule. The result is a structure-aware
\varopt\ sample, consisting of the leaves $S=\{2,3,7,10\}$. 
One can verify that the number of samples under each internal nodes is 
indeed the floor or
ceiling of its expectation.
\end{example}

\onlyinproc{
\begin{list}{\labelitemi}{\leftmargin=1em}
\item {\bf Order structures:}
  In Appendix~\ref{ordered:sec} we establish the following:
\end{list}
\begin{theorem} \label{orderoffline:thm}
For the order structure (all intervals of ordered keys), 
(i) there always exists a \varopt\ sample
distribution with maximum range discrepancy $\Delta\leq 2$.  
(ii) For any
fixed $\Delta<2$, there exist inputs for which a \varopt\ sample
distribution with maximum range discrepancy $\leq \Delta$ does not
exist. 
\end{theorem}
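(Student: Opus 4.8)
The plan is to reduce intervals to prefixes for (i), and to build an explicit adversarial instance together with a separating (dual) certificate for (ii).

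For part (i) I would exploit the special case of the order structure already treated above: prefixes are handled by the path‑hierarchy pair‑selection rule, producing a \varopt\ sample in which every prefix $[1,k]$ satisfies $|S\cap[1,k]|\in\{\lfloor P_k\rfloor,\lceil P_k\rceil\}$, where $P_k=\sum_{i\le k}p^{(0)}_i$. Writing $D_k=|S\cap[1,k]|-P_k$ for the signed prefix discrepancy, this gives $|D_k|<1$ in every realization. Since any interval is a difference of prefixes, $[a,b]=[1,b]\setminus[1,a-1]$, its discrepancy is exactly $D_b-D_{a-1}$, so $\bigl|\,|S\cap[a,b]|-p([a,b])\,\bigr|=|D_b-D_{a-1}|\le|D_b|+|D_{a-1}|<2$, and taking the max over intervals yields $\Delta<2$. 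The same identity shows that the maximal interval discrepancy of any sample equals $\max_k D_k-\min_k D_k$, the range of the bridge $D_0,\dots,D_n$ with $D_0=D_n=0$; I would use this reformulation throughout (ii).

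For part (ii) I fix $\Delta<2$ and must show no \varopt\ distribution keeps the bridge range below $\Delta$ on every realization. Since every \varopt\ distribution is a probabilistic aggregate of the IPPS vector $p^{(0)}$, it necessarily satisfies unbiasedness, fixed sum, and—crucially—the inclusion/exclusion bounds (iii). So it suffices to rule out any distribution satisfying (i)--(iii) whose support consists only of samples of range $\le\Delta$. I would build an instance whose prefix‑sum fractional parts are placed at two interleaved scales: one family of cut points with fractional part just above an integer, which (by unbiasedness alone) forces $D$ to reach nearly $+1$ with positive probability, since integer counts must overshoot to match the mean; and an interleaved family just below an integer, which forces $D$ near $-1$. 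On a realization where a near‑$+1$ cut and a near‑$-1$ cut co‑occur, the interval spanning them has discrepancy approaching $2$, so its range exceeds $\Delta$ once $n$ is large as a function of $2-\Delta$.

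The heart of the argument—and the main obstacle—is that these two forced excursions cannot be \emph{decoupled} under a \varopt\ coupling. The low‑range ``synchronized rounding'' that would confine the whole bridge to a width‑$1$ tube (as plain systematic/Madow sampling does) requires positively correlating distant inclusions, which is exactly what the inclusion bound $\E[\prod_{i\in J}p^{(1)}_i]\le\prod_{i\in J}p^{(0)}_i$ forbids; this is precisely why $\Delta$ can drop below $1$ for general fixed‑size samples but not for \varopt\ ones. Note that no \emph{single} interval is the source of the bound: each interval's count can individually be rounded to within $1$ of its mean, so the factor of $2$ is an unavoidable \emph{simultaneous} conflict among overlapping intervals. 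Making this quantitative is the crux, and I would attack it by LP duality on the probabilistic‑aggregate polytope defined by (i)--(iii): ``no \varopt\ sample of range $\le\Delta$ exists'' becomes the existence of a signed weighting of a laminar family of intervals for which the negative‑correlation constraints force the weighted sum of interval‑discrepancies past what any range‑$\le\Delta$ support can realize. Verifying this certificate on the explicit two‑scale instance would complete the non‑existence direction, the matching existence at $\Delta=2$ being already supplied by part (i).
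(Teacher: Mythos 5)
Your part (i) is correct and is essentially the paper's argument: run the path-hierarchy (prefix) aggregation so that every prefix count is the floor or ceiling of its expectation, then write an interval as a difference of two prefixes to get discrepancy $<2$. The bridge-range reformulation $\max_k D_k-\min_k D_k$ is a clean restatement but adds nothing beyond what the paper proves.

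Part (ii) has a genuine gap. You correctly identify the crux --- that the positive-probability excursion of $D$ to near $+1$ at one cut and to near $-1$ at another cut must be shown to \emph{co-occur} in a single realization, and that unbiasedness alone only gives each event positive probability separately --- but you then defer exactly this step to an unexecuted plan (``LP duality on the probabilistic-aggregate polytope,'' producing a ``signed weighting of a laminar family'' to be ``verified on the explicit two-scale instance''). No certificate is constructed, no concrete instance parameters are given, and no argument is supplied for why the negative-correlation constraints (I)/(E) actually force the conflict rather than merely failing to provide the helpful positive correlation. As stated, the proposal proves nothing beyond the trivial observation that each individual excursion occurs with positive probability. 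The paper closes this gap by a quite different and fully elementary route: take all $p_i=\epsilon\ll 1/(4m)$ with total mass $\geq 5m$, and let $s(i_\ell)$ be the prefix-mass position of the $\ell$-th sampled key. The discrepancy bound $\Delta\leq 2-1/m$ confines each gap $s(i_{j+1})-s(i_j)$ to a window of the form $(2-\Delta-\epsilon,\ \Delta+\epsilon)$, and the bound ``conditional inclusion probability of any key is at most its marginal $p_h=\epsilon$'' (a consequence of the inclusion bound (I)) shows that, conditioned on the first $j$ sampled keys, the next gap is at most $1-1/(2m)$ with positive probability. Iterating this over $k=4m$ steps yields, with positive probability, a cumulative downward drift $s(i_k)<s(i_1)+(k-1)(1-1/(2m))$ that contradicts the lower bound $s(i_k)\geq s(i_1)+k-2+1/m$ forced by the prefix constraint. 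If you want to salvage your approach, you would need to either carry out the duality computation explicitly or adopt a drift argument of this kind; without one of these, the non-existence direction is not established.
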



}

\section{Product structures} \label{product:sec}

We now consider summarizing $d$-dimensional data, where the key structure
 projected on each axis is an order or a hierarchy. 
Ranges are axis parallel hyper rectangles: a product of one-dimensional 
key ranges (order) and/or internal nodes of a hierarchy.

We develop a \varopt\ sampling algorithm where 
the discrepancy on a range $R$ is
that of a (structure oblivious) \varopt\ sample on a subset with 
$\mu\leq \min\{p(R),2ds^{\frac{d-1}{d}}\}$.  
Hence, the estimation error is subject to tail bounds \eqref{chernoff:upper} and \eqref{chernoff:lower} with this value of $\mu$ and 
concentrated around 
$\sqrt{\mu}\leq \min\{\sqrt{p(R)},\sqrt{2d}s^{\frac{d-1}{2d}}\}$.   
\notinproc{
\subsection{Algorithm}}

As in the one-dimensional case, 
the intuition behind our approach is to limit the range discrepancy 
by preferring pairwise aggregations that result in ``localized'' 
movement of ``probability mass.''     

\medskip
\noindent
{\bf Uniform case.}
We start with the special case of a 
uniform distribution over a $d$-dimensional hyper cube with measure $s=h^d$.
Our algorithm partitions the hypercube into $s$ unit cells and 
selects the sample by independently picking 
a single point uniformly from each unit cell. 
The resulting sample $S$ is a \varopt\ sample (of size $s$) of the
uniform hypercube. 
 For analysis, observe that any  axis-parallel hyperplane intersects at most 
$h^{d-1}=s^{\frac{d-1}{d}}$ unit cells.
Therefore, 
any axis-parallel box query $R$ intersects at most 
$2d s^{(d-1)/d}$ cells that are not contained in $R$. 
The only error in our estimation comes from these 
``boundary'' cells which we denote $B(R)$:
all other cells are either fully inside or fully outside $R$, and so
do not contribute to the discrepancy. 
We map each boundary cell $C\in B$ to a 0/1 random variable which is $1$
with probability proportional to the size of the overlap, $|C\cap R|$.  
These random variable are independent Poisson 
with $\mu=\sum_{C\in B} |C\cap R| \leq \min\{p(R),|B(R)|\}$, and so
the tail bounds hold.

\medskip
\noindent
{\bf General Case.}
In general, the probability mass is not distributed uniformly
throughout the space as in the previous case. 
So instead, we seek to build a partition of the space into regions so
that the probability mass is (approximately) equal. 
In particular, we consider using 
kd-trees to obtain a partition into cells containing keys whose
sum of probabilities is  $\Theta(1)$ 
(in general it is not possible to partition the discrete probabilities
to sum to exactly 1). 
Choosing kd-trees means that 
every axis-parallel hyperplane intersects $O(s^{\frac{d-1}{d}})$ cells.
Since cells are not exact units, 
we have to carefully account for aggregations of the
 ``leftover'' probabilities.

Let $K$ be the input set of weighted $d$-dimensional keys. 
Then:

\begin{list}{\labelitemi}{\leftmargin=1em}
\item
 Compute IPPS inclusion probabilities for $K$ and
set aside all keys with $p_i=1$  
(they must all be included in the summary).  
Hence, wlog, we have that all keys in $K$ have $p_i< 1$.
\item 
Compute a hierarchy $T$ over the (multidimensional) keys in $K$:
$T\leftarrow \mbox{{\sc KDhierarchy}}(0,K)\ .$
\item 
Apply the hierarchy summarization algorithm (Section~\ref{onedim:sec}) to $T$.
\end{list}

Algorithm {\sc KD-hierarchy} builds a kd-tree, splitting on each
dimension in turn. 
 At each internal node we select a hyperplane perpendicular
 to the current axis that partitions the probability weight in half 
(or as equally as possible.) 
 Each leaf of the tree then corresponds to an undivided rectangle
containing approximately unit probability mass. 
\onlyinproc{Analysis and examples are given in Appendix~\ref{KDanal:sec}.}


\begin{algorithm}[t]
\caption{{\sc KD-hierarchy}$(depth,key\_set)$}\label{kdhierarchy:alg}
\begin{algorithmic}[1]
\If {$|\mbox{key\_set}|=1$}
\State  $\mbox{\em h.val} \leftarrow$ key\_set
\State  $\mbox{\em h.left} \leftarrow$ null; $\mbox{\em h.right} \leftarrow$ null; 
\State  \textbf{return $h$} \Comment{kd-hierarchy $h$ is a leaf node}
\Else
\State $\mbox{\em h.val} \leftarrow$ null
\State a $\leftarrow$ depth mod d  \Comment{axis on which partition is made}
\If {axis $a$ has an order structure}
\State $m \leftarrow \arg\min_m \left\vert \sum_{i | key_a(i)\leq
  m}p_i-\sum_{i | key_a(i)> m}p_i\right\vert$ 
\Statex \Comment{$m$ is weighted median of key\_set ordered on  axis $a$}
\State left\_set $\leftarrow \{i | key_a(i)\leq m\}$;
\State right\_set $\leftarrow \{i | key_a(i)> m\}$;
\Else \Comment{axis $a$ has a hierarchy structure $H_a$}
\State Find the partition of key\_set into left\_set and right\_set
over all linearizations of the hierarchy to minimize
$\left\vert \sum_{i\in \mbox{left\_set}}p_i-\sum_{i\in \mbox{right\_set}} p_i\right\vert$
\EndIf
\State {\em h.left}$\leftarrow \mbox{{\sc KD-hierarchy}}(depth+1,\mbox{left\_set})$
\State {\em h.right}$\leftarrow  \mbox{{\sc KD-hierarchy}}(depth+1,\mbox{right\_set})$
\State \textbf{return} $h$ 
\EndIf
\end{algorithmic}
\end{algorithm}

\section{I/O efficient sampling}
\label{IO:sec}

\begin{algorithm}[t]
\caption{{\sc IO-aggregate}$(i)$}\label{io:alg}
{\bf Process key $i$:}
\begin{algorithmic}[1]
\State 
$p_i \leftarrow \min\{1, w_i/\tau_s\}$ \Comment{IPPS sampling prob}
\If {$p_i=1$} 
 \State $S\leftarrow S\cup \{i\}$ \Comment{$i$ is placed in the sample}
\Else \Comment{$p_i<1$}
 \State $L\leftarrow L(i)$ \Comment{$L$ is the cell that contains $i$}
 \If {$a(L)=\emptyset$} \Comment{Cell $L$ has no key with $p_a\in \{0,1\}$)}
 \State $a(L) \leftarrow i$ \Comment{$i$ becomes the active key of its cell}
 \Else \Comment{$L$ has an existing active key}
\State  $a \leftarrow a(L)$ 
\State {\sc Pair-Aggregate}$(p,i,a)$\Comment{One of $p_i,p_a$ is set}
  \State $a(L)\leftarrow \emptyset$
 \If {$p_a = 1$} 
  \State $S\leftarrow S\cup\{a\}$\Comment{$a$ is placed in the sample}
 \EndIf
 \If {$0<p_a<1$}
  \State $a(L) \leftarrow a$\Comment{$a$ remains the active key of $L$}
 \EndIf
 \If {$p_i = 1$} 
  \State $S\leftarrow S\cup\{i\}$\Comment{$i$ is placed in the sample}
 \EndIf
 \If {$0<p_i<1$}
  \State $a(L) \leftarrow i$\Comment{$i$ becomes the active key of $L$}
 \EndIf
 \EndIf
 \EndIf
\end{algorithmic}
\end{algorithm}

The algorithms presented in previous sections
assume that we can hold the full data set in memory to generate the
summary. 
As data sets grow,
we require summarization methods that are more I/O efficient. 
In particular, the reliance on being able to sort data, locate data in
hierarchies, and build kd-trees over the whole data may not be realistic for large data
sets. 
In this section, 
we present I/O efficient alternatives that generate a structure-aware
\varopt\ sample
while only slightly compromising on range discrepancy with respect to the
main-memory variants. 
The intuition behind our approach is that a structure-oblivious
\varopt\ sample of sufficient size $\tilde{O}(s)$ 
is useful to guide the construction of a structure-aware summary because with high probability it hits
{\em all} sufficiently large ranges (those with $p(R)\geq 1$)
(In geometric terms, it forms an 
$\epsilon$-net of the range space~\cite{epsilonnets:1987}).  
Once built, the summary can be kept in fast-access storage while the
original data is archived or deleted.

\medskip
\noindent
{\bf Description.}
Our algorithms perform two read-only streaming 
passes over the (unsorted) input dataset.
When using memory of size $\tilde{O}(s)$ (where $s$ is the desired
sample size), the range discrepancy is similar to that of the main
memory algorithm with high probability.
In the first pass we compute a random sample $S'$ of size $s' >
s$ using memory $s'$ via Poisson IPPS or stream
\varopt\ sampling (reservoir sampling if keys have uniform weights). 
We also compute the IPPS threshold value $\tau_s$ for a sample of size $s$
(described in Appendix~\ref{prelim:sec}).
After completion of the first pass (in main memory) we use $S'$
to construct a partition $\cL$ of the key domain.  
The partition has the property that with high probability
$p(L)\leq 1$ for all $L\in\cL$.

In the second pass, we incrementally 
build the sample $S$, initialized to $\emptyset$. 
We 
perform probabilistic aggregations, guided by the partition $\cL$,
using IPPS probabilities for a sample of size $s$. 
We maintain at most one {\em active key} $a(L)$ for each cell $L\in\cL$, which is initialized to null.
Each key $i$ is processed using {\sc IO-aggregate} (Algorithm \ref{io:alg}): if $p=\min\{1,w_i/\tau_s\}=1$, then $i$ is added to $S$. Otherwise,
if there is no active key in the cell $L(i)$ of
$i$, then $i$ becomes the active key.  
If there is an active key $a$, we {\sc Pair-Aggregate} $i$ and $a$. 
If the updated $p$ value of one of them becomes $1$, we include this
key in the sample $S$.  
The key with $p$ value in $(0,1)$ (if there is one) is the new active
key for the cell. 
The storage used is $O(s+|\cL|)$, since we maintain at most one active
key for each part and the number of keys included in $S$ is at most
$s$.
Finally, after completing the pass, we {\sc Pair-Aggregate}
the $\leq |\cL|$ active keys, placing all keys with final $p_i=1$ in $S$.

\ignore{
\medskip
\noindent
{\bf Detailed Description.}
%
In the first pass, we compute 
a \varopt\ (or Poisson IPPS) sample $S'$ (of size  $s'=|S'|$) of the
data set. 
In parallel, compute the IPPS threshold value 
$\tau_s$ to yield a sample of size $s$ (described in Section~\ref{prelim:sec}).
In main memory, we use
$S'$ to construct a partition $\cL$ of the universe of keys. 
Let $L(i)\in\cL$ denote the cell containing key $i$. 
In the second pass, we begin with an empty sample $S$ and no active
keys in each cell $L\in \cL$ (so $a(L) = \emptyset$ for all $L \in \cL$). 
For each key, we apply {\sc IO-aggregate} (Algorithm \ref{io:alg}).
%
%
Lastly, we aggregate all the leftover active keys of different 
cells (there is at most one such key per cell).  

The initial random sample $S'$  guides the
formation of the partition $\cL$. 
 A sample of size $S'=O(s\log s)$ suffices to
hit {\em all} sufficiently large ranges 
(those with combined probability at least $1$) with high probability.
(In geometric terms, it forms an 
$\epsilon$-net of the range space~\cite{epsilonnets:1987}).  
This property enables the sample to form an effective partition
$\cL$.  
The partition has the property that with high probability
$p(L)\leq 1$ for all $L\in\cL$. 
} 

\medskip
\noindent
{\bf Partition and aggregation choices.}
The specifics of the main-memory operations---the construction
of the partition and the final aggregation of active keys---depend on
the range structure.
We start with product structures and then refine the construction to
 obtain stronger results for one-dimensional structures. 
Note that keys in $S'$ with $\min\{1, w_i/\tau_s\}=1$ must be included in $S$.  Moreover, $S'$ must include all such keys---as $S'$ includes all keys with
$w_i \geq \tau_{s'} < \tau_s$, 
 it therefore includes all keys with $w_i\geq \tau_s$.  These keys can thus be 
excluded from consideration after the first phase.

\begin{trivlist}
\item {\em Product structures}: 
We compute $h\leftarrow$ {\sc KD-hierarchy}$(0,S')$ (for $S'$ with 
all keys with $w_i \geq \tau_s$ removed).
This hierarchy $h$ induces a partition of the key domain
according to the splitting criteria in each node.  
The partition $\cL$ corresponds to the leaves of $h$.
The aggregation of active keys in the final phase follows the
hierarchy $h$ (as in Section \ref{drandhierarchy:sec}).

\item {\em Disjoint ranges}: 
There is a cell in the partition for every range from $\cR$
that contains a key from $S'$.
We then induce an arbitrary order over the ranges
and put a cell for each union of ranges in $\cR$ 
which lies between two
consecutive ranges represented in the sample.  
In total we obtain $2s'$ cells.   
In the final phase, active keys can be aggregated arbitrarily.

When $s' =\Omega(s\log s)$,
with probability $1-\poly(1/s)$, all ranges of size $\geq 1$ are
 intersected by $S'$ and each cell $L$ that is a union of ranges not
 seen in $S'$ has size at most $1$ 
(thus, each range $R$ with probability mass $p(R)<1$ 
  can obtain at most one sample in $S$, and so will not be
  over-represented in the final sample).  
Thus, the maximum discrepancy is
  $\Delta<1$ with probability $1-\poly(1/s)$.

\item {\em Order}:
 We sort $S'$ according to the order (excluding keys with
$w_i\geq \tau_s$).  
If $i_1,\ldots,i_t$ are the remaining keys in sorted order, 
there is a cell $L$ for each subinterval
 $(i_j,i_{j+1}]$ and two more, 
 one for all keys $\leq i_1$ and the other for keys  $>i_t$. 
The final aggregation of active keys follows the main-memory
aggregation of ordered keys. 

When $s'=\Omega(s\log s)$, with high probability, the maximum
probability distance between consecutive keys is $1$ and therefore,
the maximum range discrepancy is $\Delta<2$.

\item {\em Hierarchy}:
A natural solution is to linearize the hierarchy, i.e. 
generate a total order consistent with the hierarchy, 
and then apply the algorithm for this order
 structure.
This obtains $\Delta<2$ with high probability. 
Alternatively, we can select all ancestors in the hierarchy of keys
 in $S$ and form a partition by matching each key to its lowest
 selected ancestor.  This will give us maximum range discrepancy
 $\Delta<1$  with high probability.   
The number of ancestors, however, can be large and therefore
this approach is best for shallow hierarchies.  
\end{trivlist}

\newcommand{\incplot}[1]{
\ifpdf
  \includegraphics[width=\figwidth]{#1}
\else
  \includegraphics[angle=270,width=\figwidth]{#1}
\fi
}

\begin{figure*}[t]
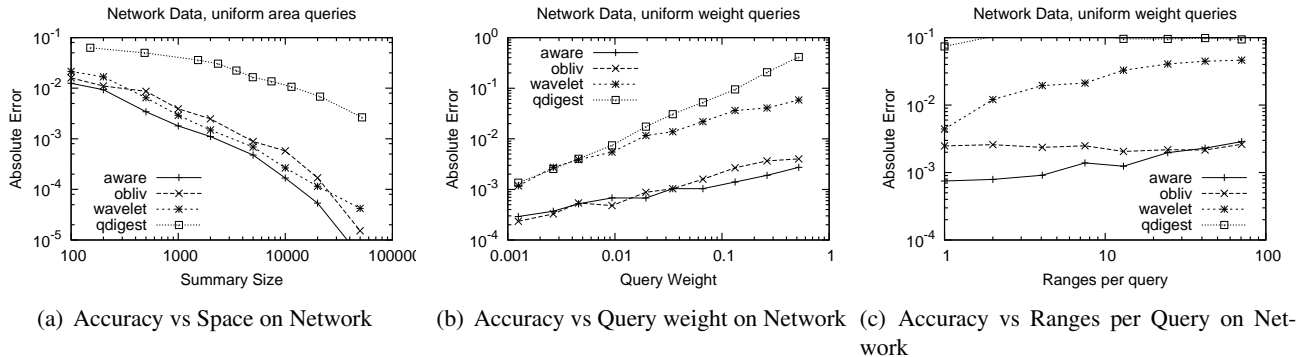

\hspace*{-6mm}
\subfigure[Accuracy vs Space on Network]{\incplot{netspace}\label{fig:netspace}}%
\subfigure[Accuracy vs Query weight on Network]{\incplot{netq}\label{fig:netq}}%
\subfigure[Accuracy vs Ranges per Query on Network]{\incplot{netb}\label{fig:netkd}}
\caption{Experimental results on Network Data set}
\label{fig:network}
\end{figure*}

\section{Experimental Study}
\noindent
We conducted an experimental study of our methods, and compared them
with existing approaches to summarizing large data sets. 
We consider three performance aspects:  building the summary, query
processing, and query accuracy.  We vary the weight of queries and the
number of ranges in each query.  

\ignore{
For structure-aware versus oblivious sampling, the former is easier to build, requiting a single pass over the dataset, query processing is identical, and structure-aware sampling is more accurate.  We expect the accuracy gap to be larger
when the query has fewer ranges and larger weight.  For sampling versus other approaches, sample-based summaries are generally more efficeint to build and use.
They also have stronger advantage when there are more ranges per query.
}

\subsection{Experimental Environment}

\para{Data Sets and Query Sets.}
We compared our approach on a variety of data sets,
and present results on two examples:
\begin{trivlist}
\item The {\em Network} dataset consists of network flow data summarizing
  traffic volumes exchanged between a large number of sources and
  destination, observed at a network peering point.
In total, there are 63K sources and 50K destinations, and a total of
196K pairs active in the data.
The space of these records is determined by the two-dimensional IP
hierarchy, i.e. X=$2^{32}$ and Y=$2^{32}$.

\item
The {\em Technical Ticket} data is derived from calls to a customer
care center of a broadband wireline access network that resulted in a
technical trouble ticket. 
Each key consists of: (i) an anonymous identifier for unique
customers; (ii) a trouble code, representing a point in a
hierarchy determining the nature of the problem identified from a
predetermined set by the customer care staff; and (iii) a network code,
indicating points on the network path to the customer's location.  
Both attributes are hierarchical with varying branching factor at each
level, representing a total of approximately $2^{24}$ possibilities
in each dimension, i.e. $X=2^{24}$ and $Y=2^{24}$. 
There are 4.8K distinct trouble codes present in the data, 
80K distinct network
locations, and 500K distinct observed combinations. 
\end{trivlist}

Each query is produced as a collection of non-overlapping rectangles
in the data space.
To study the behavior of different summaries over
different conditions, we generated a variety of queries of two types. 
In the {\em uniform area} case, 
each rectangle is placed randomly, with height and width chosen
uniformly within a range $[0, h]$ and $[0,w]$. 
We tested a variety of scales to determine $h$ and $w$, varying these
from covering almost the whole space, to covering only a very small
fraction of the data (down to a $10^{-4}$ fraction). 
In the {\em uniform weight} case, 
each rectangle is chosen to cover roughly the same fraction of the
total weight of keys.
We implement this by building a kd-tree over the whole data, and
picking cells from the same level (note, this is independent of any
kd-tree built over sampled data by our sampling methods). 
For each query, we compute the exact range sum over the data, and
compare the absolute, sum-squared and relative errors of our methods
across a collection of queries. 
In our plots below, we show results from a battery of 50 queries with
varying number of rectangles.

\para{Methods.}
We compared  structure-aware sampling  to examples of various
classes of alternatives, as described in Appendix \ref{prelim:sec}:

\begin{trivlist}
\item
{\em Obliv}, is a structure-oblivious sampling method. 
We implemented \varopt\ sampling
to give a sample size of exactly $s$.

\item
 {\em Aware}, the structure aware sampling method. 
 We follow the 2 pass algorithm (Section \ref{IO:sec} and
Section \ref{product:sec}), and first
 draw a sample of size several times larger than $s$, the target
 size (in our experiments, we set $s'= 5s$: increasing the
 factor did not significantly improve the accuracy).
 We then built the kd-tree on this sample, and took a second pass over
 the data to populate the tree.  
 Lastly, we perform pair aggregation within the tree structure to
 produce a sample of size exactly $s$. 
Although the algorithm is more involved than straight \varopt, 
 both are implemented in fewer than 200 lines of code.

\item
{\em Wavelet}, implements the (two-dimensional) standard Haar wavelet scheme.  
In a single pass we compute the full wavelet transform of the data:
each input data point contributes to $\log X \cdot \log Y$ wavelet
coefficients. 
We then prune these coefficients to retain only the $s$ largest
(normalized) coefficients for query answering. 
\item
{\em Qdigest}, implements the (two-dimensional) q-digest data
structure \cite{HSST:ISAAC04}.
This deterministically builds a summary of the data. 
Given a parameter $\epsilon$, the structure is promised to be at worst 
$O(\frac{1}{\epsilon^2}\log X \cdot \log Y)$, but in practice
materializes much fewer nodes than this, so we count the number of
materialized nodes as its size.
\item
{\em Sketch}, implements the Count-sketch, a randomized summary of the
data \cite{ccf:icalp2002}. 
Each input item contributes to $O(\log X \cdot \log Y)$ sketches of
dyadic rectangles.  
We choose the size of each sketch so that the total size is bounded by 
a parameter $s$. 
\end{trivlist}

Our implementations in Python were run on the same machine, on a 2.4GHz core
with access to 4GB of memory. 
For most methods, we perform static memory allocation as a function of
summary size in advance. 
The exception is wavelets, which needs a lot of space to build the
transform before pruning. 

\subsection{Network Data Accuracy}

\begin{figure*}[t]
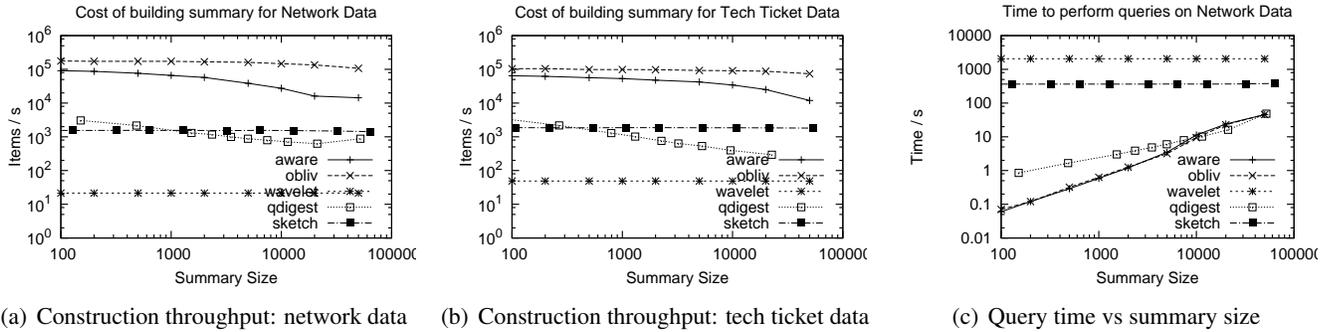

\hspace*{-6mm}
\subfigure[Construction throughput: network data]{
\incplot{buildthru}
\label{fig:buildtime}
}%
\subfigure[Construction throughput: tech ticket data]{
\incplot{techthru}
\label{fig:techthru}
}
\subfigure[Query time vs summary size]{
\incplot{querytime}
\label{fig:querytime}
}
\caption{Time costs as summary size varies}
\label{fig:time}
\end{figure*}

Figure \ref{fig:network} shows accuracy results on network data.
The y-axis shows accuracy measured as the error in the query results
divided by the total weight of all data (the absolute error). 
Our experiments which computed other metrics such as 
sum-squared error showed the same trends, and so are omitted. 

On this data, the structure-aware sampling typically achieved the
least error.
Figure \ref{fig:netspace} shows this behavior across a variety of
summary sizes  with uniform area queries each containing 25 ranges.  
For comparison, we measure the space used by each summary in terms of
elements on the original data, so in this case the smallest sample
contains 100 IP address pairs (and weights).
This is compared to keeping the 100 largest wavelet coefficients, and
a q-digest of 100 nodes. 
We also compared to sketch summaries with an equivalent amount of
space. 
However, the space required before a sketch of two-dimensional data
becomes accurate is
much larger than for the other summaries considered. 
The total error for the queries shown was off the scale on the graphs,
so we omit sketches from further accuracy comparison. 

Across most summary sizes, the structure-aware sampling is several
times more accurate than its structure-oblivious counterpart:
 Figure~\ref{fig:netspace}, which is in log-scale on both axes,
shows that the error of the structure-aware method is half to a third
that of the oblivious method given the same space: a significant
improvement.  
The deterministic q-digest structure is one to two orders of magnitude
less accurate in the same space. 
Only the Haar wavelet comes close to structure-aware sampling in terms
of accuracy. 
This is partly due to the nature of uniform area queries:
on this data, these correspond to either very small or very large weight.

Figure \ref{fig:netq} shows the accuracy under uniform weight queries, 
where each query contains 10 ranges of approximately equal weight.  
The graph shows the results for a fixed summary size of 2700 keys 
(about 32KB).
Here we see a clear benefit of sampling methods compared to wavelets
and q-digest:  note that the error of q-digest is close to
the total weight of the query. 
For ``light'' queries, comprising a small fraction of the data, there
is little to choose between the two sampling methods.  
But for queries which touch more of the data, structure-awareness
obtains half the error of its oblivious cousin. 
The general trend is for the absolute error to increase with the
fraction of data included in the sample. 
However, 
the gradient of these lines is shallow, meaning that
 the {\em relative} error is actually improving, as our analysis indicates: 
 structure aware sampling obtains 0.001 error on a query that covers
more than 0.1 of the whole data weight, i.e., the observed
relative error is less than 1\%. 

Figure \ref{fig:netkd} shows the case when we hold the total weight of
the query steady (in this case, at approximately 0.12 of the total
data weight), and vary the number of ranges in each query. 
We see that the structure oblivious accuracy does not vary much: 
as far as the sample is concerned, all the queries are essentially
subset queries with similar weight. 
However, when the query has fewer ranges, the structure aware sampling can
be several times better. 
As the number of ranges per query increase, each range becomes
correspondingly smaller, and so for queries with 40 or more ranges
there is minimal difference between the structure aware and structure
oblivious. 
On this query set, wavelets are an order of magnitude less accurate,
given the same summary size. 

\subsection{Scalability}
To study scalability of the different methods, we measured the time to
build the summaries, and the time to answer 2500
range queries. 
Figure \ref{fig:time} shows these results as we
vary the size of the summary. 
Although our implementations are not
optimized for performance, we believe that these results show the general
trend of the costs. 
Structure-oblivious is the fastest, whose cost is essentially
bounded by the time to take a pass through the data
(Figures \ref{fig:buildtime} and \ref{fig:techthru}). 
Structure-aware sampling requires a second pass through the data, 
and for large summaries the 
extra time to locate nodes in the kd-tree reduces the throughput.  
We expect that a more engineered implementation could reduce this
building cost. 

\begin{figure*}[t]
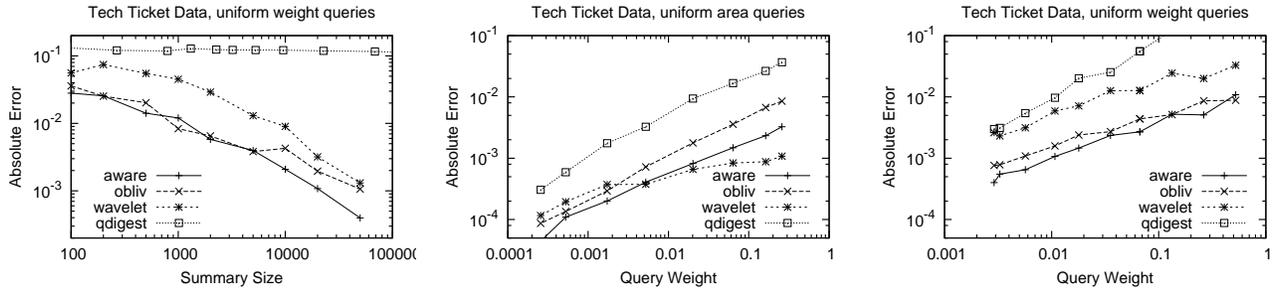

\hspace*{-6mm}
\subfigure[Accuracy vs space on Tech Ticket]{\incplot{techspace}\label{fig:techspace}}%
\subfigure[Accuracy vs query weight on Tech Ticket]{\incplot{techq}\label{fig:techq}}%
\subfigure[Accuracy vs query weight on Tech Ticket]{\incplot{techk}\label{fig:techk}}
\caption{Accuracy on the Tech Ticket Data}
\label{fig:tech}
\end{figure*}

The q-digest and sketch summaries are both  around 2 orders of
magnitude slower to build the summary. 
These structures are quite fast in one-dimension, but have to do more
work in higher dimensions. 
For example, the sketch needs to update a number of locations which
grows with the product of the logarithm of the dimension sizes. 
On pairs of 32 bit addresses, this factor is proportional to
$32^2$=1024. 
The cost of building the 2D Haar wavelet summary is 4 orders of magnitude
more than sampling. 
The reason is that each point in the data contributes to 1024
coefficients, leading to millions of values before thresholding. 

Since the samples, once built, have the same form, query answering
takes the same time with both obliv and aware (Figure
\ref{fig:querytime}): 
we just compute the
intersection of the sample with each query rectangle. 
The cost grows with increasing sample size, as we are just scanning
through the sample to find which points fall within each rectangle. 
Still, this naive approach to query answering can process thousands of
query rectangles per second (recall, the y-axis is showing the time to
complete all 2500 rectangle queries). 
In comparison, asking this many queries over the full data takes 2 minutes.
Again, we see the disadvantage of the wavelet approach: each rectangle
query takes of the order of a second---in other words, it is about
1000 times slower than using a sample. 
The reason is that each rectangle is decomposed into
dyadic rectangles.
In the worst case, there are 1000 dyadic rectangles formed, and each
dyadic rectangle requires  the value of 1000 coefficients. 
The effect as we go to higher dimensions only gets worse, growing
exponentially with the dimension.
While there should be more efficient ways to use wavelets, the overall
cost is offputtingly high. 

\subsection{Tech Ticket Data Accuracy}

The plots in Figure \ref{fig:tech} show accuracy experiments
on the tech ticket data set. 
Figure \ref{fig:techspace} shows that
structure-aware and structure-oblivious sampling behave similarly for
smaller sample sizes: this is because this data set has many high
weight keys which must be included in both samples.
For large sample sizes, the methods diverge, and the benefit of
structure awareness is seen: the error is less than half that for the
same sized obliv summary for samples that are between 1\% and
10\% of the data size.  


The next two plots compare the case for uniform area queries 
(over 25 ranges, Figure \ref{fig:techq}) and uniform weight queries
(10 ranges, Figure \ref{fig:techk}).
We see that on uniform area queries, wavelets can become competitive
for higher weights, but this is not seen when the weight of each range
is controlled. 
In either case, for these queries of several ranges, structure-aware
sampling seems to give the best results overall . 
Certainly, wavelets do not seem to scale with this
type of data: tens to hundreds of millions of coefficients are
generated before thresholding, leading to a high time and space cost. 
Figure~\ref{fig:techthru} emphasises the impracticality of wavelets on
this data: generating and using samples takes seconds,
while using wavelets takes  (literally) hours.


\section{Concluding Remarks}

We introduce structure-aware sampling as an alternative to
structure-oblivious sampling and tailored deterministic summaries. 
Our structure-aware samples are \varopt---they retain the full benefits
of state-of-the-art sampling over deterministic summaries: flexibility
and support for arbitrary subset queries, accuracy on these queries,
unbiased estimation, and exponential tail bounds on the error.  
By optimizing the sample distribution for range queries, we obtain
superior accuracy with respect to structure-oblivious samples and  
match or surpass the accuracy of tailored deterministic summaries.
Our proposed algorithms are simple to implement and are I/O
efficient, requiring only two sequential read passes over the data
and memory which is independent of the input size. 
Going to only a single (streaming) pass requires quite different
ideas, since in this case the \varopt\ sample is unique, and hence
structure oblivious. 
Instead, it is necessary to relax the \varopt\ requirement to allow
the freedom to exploit structure; our initial results in this
direction are presented in \cite{CCD:sigmetrics11}.

{
}

\newpage
\appendix

\section{Background on Summarization} \label{prelim:sec}

\subsection*{Sampling Techniques}
A sample $S$ is a random subset drawn from the space of keys $K$. 
In our context, the keys $K$ are members of a structured domain $\cK$.

\noindent
{\bf Inclusion Probability Proportional to Size (IPPS)~\cite{Hajekbook1981}:}
Many sampling schemes set the inclusion
probability of each key in $S$  proportional to its weight, but truncated
so as not to exceed $1$.  
When defined with respect to a {\em threshold} 
parameter $\tau >0$, the inclusion probability 
of $i$ is $p_i=\min\{1,w_i/\tau\}$.  

The expected size of the sample is just the sum of the $p_i$'s, so we
can achieve a sample of expected size $s$ by choosing an appropriate
threshold $\tau_s$. 
This $\tau_s$ can be found by solving the equation: 

\smallskip
\centerline{
$\sum_i \min\{1, w_i/\tau_s\} =s\ .$
}
\smallskip
\noindent
 $\tau_s$ can be computed via a linear pass on the data, using a
heap $H$ of size at most $s$. 
Let $L$ denote the total weight of keys processed that are not present
in the heap $H$. 
Algorithm \ref{get_tau_k:alg} presents the algorithm to update the
value of $\tau$ for each new key. 
If the weight of the new key is below the current value of $\tau$,
then we add it to $L$, the sum of weights; else, we include the weight
in the heap $H$. 
Then adjust the heap: if the heap has $s$ items, or the smallest
weight in the heap falls below the new value of $\tau$
(Line~\ref{linewhile}), we move the smallest weight from $H$ to
$L$. 
Finally, we compute the new $\tau$ value in Line~\ref{line_get_tau}. 
After processing all keys, we have found $\tau_s$ for this data.

\ignore{
{\bf Initialize:}
\begin{algorithmic}[1]
\State $H\leftarrow \emptyset$ \Comment{Initialize an empty heap}
\State $L\leftarrow 0$ \Comment{Total weight of processed items that are not in $H$}
\State  $\tau\leftarrow 0$
\end{algorithmic}
}

\medskip
\noindent
{\bf The Horvitz Thompson (\HT) estimator~\cite{HT52}} 
allows us to accurately answer queries from a sample, by 
providing {\em adjusted weights} to use for each key in the sample. 
For a key $i$ of weight $w_i$ and inclusion probability $p_i$ the 
adjusted weight is $a(i)=w_i/p_i$ if $i$ is included in
the sample and $a(i)\equiv 0$ otherwise.  
For all $i$, $a(i)$ is an optimal estimator of $w_i$ in that it minimizes 
the variance $\var[a(i)]$.  

 A summary that includes the keys in $S$ and 
their \HT\ adjusted weights can be used to 
estimate the weight $w(J)$ of any subset of keys $J\subset K$: 
 $a(J)=\sum_{i\in J} a(i)=\sum_{i\in {S\cap J}} a(i)$.
 The \HT\ estimates are clearly
unbiased: for all $i$, $\E[a(i)]=w_i$ and from linearity of expectation,
for all subsets $J$, $\E[a(J)]=w(J)$.  

With IPPS, the \HT\ adjusted weight of an included key
is $\tau$ if $w_i\leq \tau $ and $w_i$ otherwise.  
Hence, for any subset $J$ we have
\begin{equation} \label{ipps_est_subset}
\textstyle
a(J)=\sum_{i\in J | w_i\geq \tau} w_i + \left\vert\{ i\in J\cap S : w_i<\tau \}\right| \cdot \tau\ .
\end{equation}
We can store either the adjusted weights for
each key, or the original weights (and compute the adjusted weights
via $\tau$ on demand). 
The variance of adjusted weight $a_i$ 
is $\var[a_i]=w_i^2(1/p_i-1)\equiv w_i(\tau-w_i)$ 
if $w_i\leq \tau$ and $0$ otherwise. 
Using IPPS inclusion probabilities with \HT\ estimates
minimizes the sum $\SV[a]=\sum_i \var[a(i)]$ of per-key variances 
over all inclusion probabilities and estimators with
the same (expected) sample size $s=\sum_i p_i$.

\medskip
\noindent
{\bf Poisson sampling} is where
the choice of whether to sample any key is made independently of all
other choices.
A Poisson IPPS sample of a specified expected size $s$ can be
 computed efficiently with a single pass over the data.

\begin{algorithm}[t]
\caption{{\sc Stream\_$\tau$}$(i)$ : processing item $i$}\label{get_tau_k:alg}
\begin{algorithmic}[1]
\If {$w_i<\tau$} $L\leftarrow L + w_i$
\Else \, Insert$((i,w_i),H)$;
\EndIf
\While {($|H|=s$)\ {\bf or}\ (min$(H) < \tau$)} \label{linewhile}
\State $a \leftarrow $ delete\_min$(H)$.
\State $L \leftarrow  L+w_a$
\State $\tau \leftarrow \frac{L}{s-|H|}$ \label{line_get_tau}
\EndWhile
\end{algorithmic}
\end{algorithm}

\medskip
\noindent
{\bf \varopt\ sampling}~\cite{Cha82,Tille:book,varopt:CDKLT08}
 improves over Poisson sampling by guaranteeing a fixed
sample size (exactly $s$) and giving tighter
estimates~\cite{varopt:CDKLT08,varopt_full:CDKLT10,Cha82,Tille:book,GandhiKPS:jacm06}: 
the variance of any subset-sum
 estimate is at most that obtained by a Poisson IPPS sample.
\varopt\ samples are optimal in that 
for {\em any} subset size, they minimize the 
average variance of the estimates~\cite{ST07,varopt:CDKLT08}.
A \varopt\ sample of size $s$, denoted \varopt$_s$, 
can be computed with a single pass over the
data~\cite{varopt_full:CDKLT10}, 
generalizing the notion of reservoir sampling. 

\noindent
A sample distribution over $[n]$ is \varopt\ for a parameter $\tau$ if:
 \begin{trivlist}
\item [(i)]
  Inclusion probabilities are IPPS, i.e. for a key $i$,
$p_i=\min\{1,w_i/\tau\}$. 
\item [(ii)]
 The sample size is exactly $s=\sum_{i\in [n]} p_i$. 
 \item [(iii)] 
{\em High-order inclusion \& exclusion probabilities are bounded 
 by respective products of first-order probabilities}, so, 
for any $J\!\subseteq\![n]$, 
\begin{align*}
\mbox{(I):}&&\E[\prod_{i\in J} X_i] & \quad \leq \quad \prod_{i\in J} p_i \\
\mbox{(E):}&&\E[\prod_{i\in J} (1-X_i)] & \quad \leq \quad  \prod_{i\in J} (1-p_i)
\end{align*}
where $p_i$ is the probability that key $i$ is included
in the sample $S$ and $\E[\prod_{i\in J} X_i]$
is the probability that all $i\in J$
are included in the sample. Symmetrically
$\E[\prod_{i\in J} (1-X_i)]$ is the probability 
that all $i\in J$ are excluded from the sample. 
 \end{trivlist}

\noindent
{\bf Tail bounds.}
  For both Poisson~\cite{Chernoff} and \varopt~\cite{PancSri:sicomp97,Sri01,GandhiKPS:jacm06,varopt_full:CDKLT10} samples we have
strong tail bounds on the number of samples $J\cap S$ from a subset $J$.
We state the basic Chernoff bounds on this quantity
(other more familiar bounds are derivable from them).
Let $X_i$ be an indicator variable for $i$ being in the sample, so
$X_i = 1$ if $i\in S$ and $0$ otherwise.
Then $X_J$, the number of samples intersecting $J$ is
$X_J=\sum_{i\in J} X_i$, with mean 
  $\mu = \E[X_J]=\sum_{i\in J} p_i$. 

If $\mu \leq a\leq s$, 
the probability of getting more than $a$ samples out of $s$ 
in the subset $J$ is
\begin{equation} \label{chernoff:upper}
\Pr[X_J \ge a] \leq
\left(\frac{s-\mu}{s-a}\right)^{m-a}\left(\frac{\mu}{a}\right)^a
\quad 
\left[\leq
e^{a-\mu}\left(\frac{\mu}{a}\right)^a
\right]\ .
\end{equation}
For $0 \leq a\leq\mu$, the probability of fewer than $a$
samples in $J$ is 
\begin{equation} \label{chernoff:lower}
\Pr[X_J \le a] \leq
\left(\frac{s-\mu}{s-a}\right)^{m-a}\left(\frac{\mu}{a}\right)^a
\quad 
\left[\leq
e^{a-\mu}\left(\frac{\mu}{a}\right)^a
\right]\ .
\end{equation}
\noindent
When sampling with IPPS, 
these bounds on the number of samples also imply tight bounds
on the estimated weight $a(J)$: 
It suffices to consider $J$ such that $\forall i\in J, p_i<1$ 
(as we have the exact weight of keys
with $p_i=1$).  
Then
the \HT\ estimate is
$a(J)= \tau X_J= \tau |J\cap S|$ and thus 
the estimate is guaranteed to be accurate:
\begin{equation} \label{chernoffw:bounds}
\Pr[a(J) \le h],\ \Pr[a(J) \ge h] \leq
e^{(h-w(J))/\tau}(w(J)/h)^{h/\tau}
\ .
\end{equation}

\noindent
{\bf Range discrepancy.}
The {\em discrepancy} of a sample measures the difference between the
number of keys sampled in a range to the number expected to be there. 
Formally, consider keys with attached IPPS inclusion probabilities
$p_i$ over a structure with a set of ranges $\mathcal{R}$. 
The discrepancy $\Delta$ of a set of keys $S$ with respect to range $R$ is

\smallskip
\centerline{$
\Delta(S,R)=\big| |S\cap R| - \sum_{i\in R} p_i \big| .$}

\smallskip
The {\em maximum range discrepancy} of $S$ is accordingly the maximum
discrepancy over $R\in \cR$.
For a sample distribution $\Omega$, we define the
maximum range discrepancy as

\smallskip
\centerline{$
\Delta = \max_{S\in\Omega} \max_{ R\in \cR} \big\vert |S\cap R| - \sum_{i\in R} p_i \big\vert\ $}

\smallskip
The value of the discrepancy has implications for the accuracy of
query answering. 
The absolute error of the \HT\ estimator (\ref{ipps_est_subset}) on $R$ is
the product of $\tau$ and the discrepancy: $\tau \cdot \Delta(S,R)$.
We therefore seek sampling schemes which have a small discrepancy. 

For a Poisson IPPS or \varopt\ sample, the discrepancy on a range $R$ 
is subject to tail bounds (Eqns.~\eqref{chernoff:upper}
and~\eqref{chernoff:lower}) with $\mu=p(R)$ and has expectation $O(\sqrt{p(R)})$.

\medskip
\noindent
{\bf $\epsilon$-approximations.}
Maximum range discrepancy 
is closely related to the concept of an 
{\em $\epsilon$-approximation} \cite{VCepsapprox:1971}.  
A set of points $A$ is an 
$\epsilon$-approximation of the range space $(X,\cR)$, if for
 {\em every} range $R$,

\smallskip
\centerline{
$\left\vert \frac{|A \cap R|}{|A|}-\frac{|X\cap R|}{|X|} \right\vert <
  \epsilon\ .$}

\smallskip
(This is stated for uniform weights but easily generalizes).
An $\epsilon$-approximation of size $s$ has 
maximum range discrepancy $\Delta=\epsilon s$.  
A seminal result by Vapnik and Chervonenkis bounds the maximum
estimation error of a random sample over ranges when the VC dimension
is small: 
\begin{theorem}[from \cite{VCepsapprox:1971}] \label{VCapprox} 
For any range space with VC dimension $d$, there is a constant $c$, 
so that a random sample of size 
$$s = c\epsilon^{-2}(d\log(d/\epsilon) +
\log(1/\delta))$$
\noindent
 is an $\epsilon$-approximation with probability $1-\delta$.  
\end{theorem}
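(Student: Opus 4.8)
The plan is to prove the classical Vapnik--Chervonenkis bound by the standard double-sampling (symmetrization) argument, combined with the Sauer--Shelah lemma and the Chernoff-type tail bounds already available as \eqref{chernoff:upper}--\eqref{chernoff:lower}. Write $N=|X|$, let $p(R)=|X\cap R|/N$ denote the true frequency of a range, and let $\hat p_A(R)=|A\cap R|/s$ denote its empirical frequency on a sample $A$ of size $s$. The goal is to bound $\Pr[\,\exists R\in\cR:\ |\hat p_A(R)-p(R)|\ge \epsilon\,]$ by $\delta$. The difficulty is that $\cR$ may be infinite, so a naive union bound over ranges is hopeless; the VC dimension is exactly what tames this.

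First I would symmetrize. Introduce an independent \emph{ghost} sample $A'$ of the same size $s$. A second-moment (Chebyshev) calculation, using that $\hat p_{A'}(R)$ concentrates around $p(R)$ once $s=\Omega(\epsilon^{-2})$, shows that whenever some range deviates by $\epsilon$ on $A$, it also deviates by $\epsilon/2$ between $A$ and $A'$ with probability at least $1/2$. Hence
\[
\Pr[\exists R:\ |\hat p_A(R)-p(R)|\ge\epsilon]\ \le\ 2\,\Pr[\exists R:\ |\hat p_A(R)-\hat p_{A'}(R)|\ge \epsilon/2].
\]
The payoff is that the right-hand event depends on the two samples only through how ranges cut the $2s$ points of $A\cup A'$.

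Next I would discretize using the VC dimension. Conditioned on the combined multiset $A\cup A'$, the number of distinct subsets cut out by $\cR$ is the shatter function $\Pi_{\cR}(2s)$, which by the Sauer--Shelah lemma is at most $\sum_{i=0}^{d}\binom{2s}{i}=O((2s)^d)$ when the VC dimension is $d$. It therefore suffices to control $O((2s)^d)$ genuinely distinct ranges. For each fixed such range, I would regenerate the randomness as a uniformly random split of the fixed $2s$ points into the halves $A$ and $A'$; the probability that the two halves differ in count by at least $\epsilon s/2$ is a hypergeometric (sampling-without-replacement) tail event, bounded via \eqref{chernoff:upper}--\eqref{chernoff:lower} (or the Hoeffding form) by $\exp(-\Omega(\epsilon^2 s))$.

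Finally I would assemble the pieces with a union bound: the total failure probability is at most $2\cdot O((2s)^d)\cdot\exp(-\Omega(\epsilon^2 s))$. Taking logarithms, this falls below $\delta$ once $\epsilon^2 s \ge c'\,(d\log s+\log(1/\delta))$; since $\log s=O(\log(d/\epsilon)+\log\log(1/\delta))$ at the claimed sample size, choosing $s=c\,\epsilon^{-2}(d\log(d/\epsilon)+\log(1/\delta))$ for a suitable absolute constant $c$ closes the bound, and rescaling $\epsilon$ by a constant factor converts the $\epsilon/2$ back to $\epsilon$. The main obstacle is the symmetrization step: one must argue carefully that the data-dependent, adversarially selected bad range on $A$ survives, with constant probability, as a bad range between $A$ and $A'$. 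This is precisely where the second-moment bound and the restriction to the $2s$ combined points do the essential work of replacing the infinite family $\cR$ by a polynomially large one.
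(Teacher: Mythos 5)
Your outline is a correct rendering of the classical Vapnik--Chervonenkis proof, but it is not the route the paper takes: the paper states the theorem as a citation to \cite{VCepsapprox:1971} and only remarks that, for the range spaces it actually uses, ``the theorem can be proved from a direct application of Chernoff bounds.'' The distinction matters. Your argument---ghost sample, symmetrization via a second-moment bound, Sauer--Shelah to reduce $\cR$ to $O((2s)^d)$ projections on the $2s$ points, a hypergeometric tail for the random split, then a union bound---proves the theorem for an \emph{arbitrary} range space of VC dimension $d$, which is strictly more general. The paper instead exploits that its structures (intervals, hierarchy nodes, axis-parallel boxes) induce only polynomially many distinct ranges, so one can union-bound directly over those ranges and apply the per-range Chernoff bounds \eqref{chernoff:upper}--\eqref{chernoff:lower}, with no symmetrization and no Sauer--Shelah. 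What the paper's shortcut buys is exactly the extension it needs in the very next sentence: \varopt\ samples are \emph{not} i.i.d.\ (inclusions are negatively correlated and the sample size is fixed), so the exchangeability underlying your double-sampling step---regenerating the randomness as a uniform split of a fixed $2s$-point multiset---does not apply to them; the direct union-bound route needs only per-subset tail bounds, which \varopt\ sampling provides by property (iii). So if you intend your proof to cover the paper's claim that \varopt\ samples also satisfy the theorem, the symmetrization step is a genuine obstruction, whereas for plain i.i.d.\ sampling your argument is fine (modulo the standard, routine care needed to verify that $d\log s$ is absorbed by the stated sample size).
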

The structures we consider have constant VC dimension, and the theorem
can be proved from a direct application of Chernoff bounds. 
Because \varopt\ samples satisfy Chernoff bounds, they also satisfy
the theorem statement.  
By rearranging and substituting the bound on $\epsilon$, 
we conclude that a Poisson IPPS or a \varopt\ sample of size $s$ has
maximum
 range discrepancy  
$\Delta=O(\sqrt{s\log s})$.  
 with probability ($1-\poly(1/s)$) 

\ignore{ Comment from E:  the structure we consider have a small ``scaffold'' dimension, which is, for a set $S$ of size $s$,  the maximum ratio of the logarithm of the number for distinct ranges induced by $S$ and the logarithm of $s$.  When $d$ is replaced by the scaffold dimension, the theorem follows from Chernoff bounds.  On our structures the VC dimension and the scaffold dimension are roughly the same.}

\ignore{
The smallest discrepancy we can hope for is $\Delta(S,R)<1$, which means
that
\begin{equation} \label{saopt:cond}
\ |S\cap R|=\left\{ \begin{array}{ll} \lfloor \sum_{i\in R} p_i \rfloor, &  \text{prob. } \lceil \sum_{i\in R} p_i \rceil - \sum_{i\in R} p_i \\ \lceil \sum_{i\in R} p_i \rceil, &  \text{prob. } \sum_{i\in R} p_i-\lfloor \sum_{i\in R} p_i \rfloor \end{array}\right. \end{equation}
We shall see that Hierarchy structures admit sample distributions which
satisfy~(\ref{saopt:cond}) for all ranges $R\in\cR$. Generally, however, 
we have to settle for larger values of $\Delta$.

 }

\subsection*{Other Summarization Methods}
\noindent
In addition to sampling methods such as Poisson IPPS and \varopt\, 
there have been many other approaches to summarizing data in range
spaces. 
We provide a quick review of the rich literature on 
summarization methods specifically designed for range-sum queries.
Some methods use random sampling in their construction,
although the resulting summary is not itself a \varopt\ sample. 

\smallskip
\noindent
{\bf $\epsilon$-approximations.}
\ignore{
 Geometric discrepancy theory provides us with bounds on the size
of $\epsilon$-approximations that are generally much superior to what is
obtained via a structure-oblivious sample.
For order or hierarchy structures, 
it is easy to obtain a deterministic 
set with maximum range discrepancy $\Delta=1$, whereas
$\Delta=O(\sqrt{s\log s})$ is obtained via 
a structure-oblivious sample of the same size.

 For higher dimensions, constructions of $\epsilon$-approximations are more complex.
For half spaces on a 
(uniform distribution on the) $d$ dimensional hypercube there is a
lower bound of $\Omega(1/\epsilon^{\frac{2d}{d+1}})$~\cite{Beck87irregularities,Alexander_irregularities:1991}, and a nearly matching upper bound by
Matousek~\cite{Matousek_disc_halfspaces:DCG1995} (cf.~\cite{Matousekbook:geodisc,Chazelle:bookdisc}) followed by algorithmic constructions~\cite{SuriTZ:CG2004,BCEG:ToA2007} (In our terminology, there are distributions where any set of points has  maximum range discrepancy $\Delta=\Omega(s^{\frac{d-1}{2d}})$.)
}
As noted above, $\epsilon$-approximations accurately estimate the
number of points falling in a range. 
For axis-parallel hyper-rectangles, Suri, Toth, and Zhou presented randomized 
constructions that with constant probability 
generates an $\epsilon$-approximation 
of size $O(\epsilon^{\frac{-2d}{d+1}}\log^d(\epsilon^{\frac{2}{d+1}}n))$ and an alternative but computationally intensive construction 
with much better asymptotic dependence on $\epsilon$: $O(\frac{1}{\epsilon}\log^{2d+1} \frac{1}{\epsilon})$~\cite{SuriTZ:CG2004}.   
The best space upper bound we are aware of is
$O(\frac{1}{\epsilon}\log^{2d}(\frac{1}{\epsilon}))$~\cite{Phillips:icalp2008}. 

A proposal in~\cite{BCDHS:KDD2003}
is to construct an $\epsilon$-approximation of a random sample of the dataset
instead of the full dataset.  This is somewhat related to our I/O efficient
 constructions that utilize an initial larger random sample. 
The differences are that we use the sample only as a guide---the final
summary is not necessarily a subset of the initial sample---and that
the result of our construction is a structure-aware \varopt\ sample of
the full data set. 

Another construction \cite{HSST:ISAAC04}
trades better dependence on $\epsilon$ with logarithmic dependence on
domain size.
The data structure is built deterministically by dividing on each
dimension in turn, and retaining ``heavy'' ranges. 
This can be seen as a multi-dimensional variant of the related
q-digest data structure \cite{ShrivastavaBAS:sensys04}.


\ignore{
For $d\geq2$, discrepancy
$\Delta=\Omega(s^{\frac{d-1}{2d}})$ means that the summary 
provides no meaningful approximation for 
ranges of size $o(s^{\frac{d-1}{2d}})$.  
In contrast, Poisson samples have expected range discrepancy 
$O(p(R)^{\frac{1}{2}})$, so for range queries with
$p(R) \ll s^{\frac{1}{2}}$ we get better estimates from these samples.
What we aim for are \varopt\ summaries 
with expected range discrepancy of 
$\min\{p(R)^{\frac{1}{2}},s^{\frac{1}{4}}\}$.
}

\smallskip
\noindent
{\bf Sketches and Projections.}
Random projections are a key tool in dimensionality reduction, which
allows large data sets to be compactly summarized and queried. 
Sketches are particular kinds of random projections, which can be
computed in small space \cite{ccf:icalp2002}.
By keeping multiple sketches of the data at multiple levels of
granularity, we can provide $\epsilon$-approximation-like bounds in
space that depends linearly on $\epsilon^{-1}$ and logarithmically on
the domain size.   

\smallskip
\noindent
{\bf Wavelet transforms and deterministic histograms.}
A natural approach to summarizing large data for range queries is to
decompose the range space into ``heavy'' rectangles. 
The answer to any range query is the sum of weights of all rectangles
fully contained in by the query, plus partial contributions from those
partly overlapped by the query. 
The accuracy then depends on the number (and weight) of rectangles
overlapped by the query. 
This intuition underlies various attempts based on building 
 (multi-dimensional)
histograms~\cite{GKTD:sigmod2000,PoosalaIoannidis:VLDB97,LKC:sigmod1999}.
These tend to be somewhat heuristic in nature, and offer little by way
of guaranteed accuracy.

More generally, we can represent the data in terms of objects other
than rectangles: this yields transformations such as 
DCT, Fourier transforms and wavelet representations. 
Of these, wavelet representations are most popular for representing
range data~\cite{Matias98wavelet-basedhistograms,Vitter98datacube}.
Given data of domain size $u$, the transform generates $u$
coefficients, which are then {\em thresholded}: we pick the $s$
largest coefficients (after normalization) to represent the data. 
When the data is dense, we can compute the transform in time $O(u)$,
but when the domain is large and the data sparse, it is more efficient
to generate the transform of each key, in time proportional to the
product of the logarithms of the size of each dimension per key. 

\ignore{
\medskip
\noindent
{\bf Structure-aware sampling of streams} 
We are currently exploring structure-aware sampling in data streams.
In a stream context, there is a unique \varopt\ sample distribution,
and therefore it is not possible to obtain samples that are both
\varopt\ and structure-aware.  We are exploring tradeoffs between
\varopt\ properties and structure awareness. 
}

\section{Sequences of aggregations}
\label{probaggapp}
\noindent
Our algorithms repeatedly apply probabilistic aggregation:

\begin{lemma}\label{lemma:a}
 Consider a sequence $p^{(0)},p^{(1)},p^{(2)},p^{(3)},\ldots$ where
$p^{(h)}$ is a probabilistic aggregate of $p^{(h-1)}$.
\begin{itemize}
\item
$p^{(h)}_i\in \{0,1\}$ implies $p^{(h+1)}_i\equiv p^{(h)}_i$.  
Thus in a sequence of aggregations, 
any entry that is set remains set, so
the number of positive entries in the output is
at most that in the input.
\item
Probabilistic aggregation is {\em transitive}, that is, if $h_1<h_2$ then
$p^{(h_2)}$ is a probabilistic aggregate of $p^{(h_1)}$.
\end{itemize}
\end{lemma}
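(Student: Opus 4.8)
The plan is to derive both bullets directly from the defining conditions (i)--(iii) of a probabilistic aggregate, reading each aggregation step as a statement about the conditional distribution of $p^{(h)}$ given the history $p^{(0)},\dots,p^{(h-1)}$.

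For the first bullet, I would note that it needs \emph{only} Agreement in Expectation together with the range constraint $p^{(h+1)}_i\in[0,1]$. If $p^{(h)}_i=0$, condition (i) gives $\E[p^{(h+1)}_i\mid p^{(h)}]=0$; since $p^{(h+1)}_i\ge 0$, a nonnegative random variable of zero mean vanishes almost surely, so $p^{(h+1)}_i\equiv 0$. Symmetrically, if $p^{(h)}_i=1$ then $\E[1-p^{(h+1)}_i\mid p^{(h)}]=0$ with $1-p^{(h+1)}_i\ge 0$ forces $p^{(h+1)}_i\equiv 1$. Thus a set entry is preserved at the next step, and by induction at every later step. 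The count statement then follows because a zero entry can never become positive: $\{i:p^{(h_2)}_i>0\}\subseteq\{i:p^{(h_1)}_i>0\}$ for $h_1<h_2$, so the number of positive entries is nonincreasing along the sequence.

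For the second bullet (transitivity), I would first reduce to a single composition step: if $B$ is a probabilistic aggregate of a fixed vector $A$, and, conditioned on each realization $b$ of $B$, $C$ is a probabilistic aggregate of $b$, then $C$ is a probabilistic aggregate of $A$; the general claim then follows by induction on $h_2-h_1$. To verify the composition I condition on $B$ and use the tower property. Agreement in expectation is $\E[C_i]=\E[\E[C_i\mid B]]=\E[B_i]=A_i$. Agreement in sum holds surely: given $B=b$ we have $\sum_i C_i=\sum_i b_i$ almost surely, and $\sum_i b_i=\sum_i A_i$ for every realization $b$, so $\sum_i C_i=\sum_i A_i$ surely.

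The inclusion--exclusion bounds (iii) are where the real work lies, and I expect them to be the main obstacle, since unlike (i) and (ii) they are multiplicative and must survive an extra layer of averaging. The key is that the inner bound holds per realization: for each $b$, $\E[\prod_{i\in J}C_i\mid B=b]\le\prod_{i\in J}b_i$, i.e. $\E[\prod_{i\in J}C_i\mid B]\le\prod_{i\in J}B_i$ pointwise. Taking expectations and using monotonicity together with the outer bound for $B$ over $A$ gives $\E[\prod_{i\in J}C_i]=\E[\E[\prod_{i\in J}C_i\mid B]]\le\E[\prod_{i\in J}B_i]\le\prod_{i\in J}A_i$, and identically for the exclusion product with $1-C_i$ and $1-B_i$. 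All products are nonnegative because every coordinate lies in $[0,1]$, which is precisely what legitimizes the monotonicity step. The only point requiring care is to phrase each step as a conditional statement given the full history (a filtration argument), so that the tower property chains correctly when iterated; with that in place the induction closes immediately.
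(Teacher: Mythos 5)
Your proposal is correct and follows essentially the same route as the paper: the paper likewise dismisses the first bullet and conditions (i)--(ii) as immediate, and proves the inclusion bound (I) under composition by the tower property $\E[\prod_{i\in J}p^{(2)}_i]=\E\bigl[\E[\prod_{i\in J}p^{(2)}_i\mid p^{(1)}]\bigr]\le\E[\prod_{i\in J}p^{(1)}_i]\le\prod_{i\in J}p^{(0)}_i$, with (E) handled symmetrically. Your added detail on why set entries persist (a nonnegative variable with zero mean vanishes almost surely) is a correct elaboration of what the paper leaves implicit.
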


\begin{proof}
We show that (I) holds under transitivity.  The proof for (E) is
similar, and the other properties are immediate.
We show that if $p^{(h+2)}$ is an aggregate of $p^{(h+1)}$, and 
$p^{(h+1)}$ is an aggregate of $p^{(h)}$, then
$p^{(h+2)}$ is an aggregate of $p^{(h)}$.
%

$\begin{array}{rcl}
\displaystyle
\E_{p^{(2)}|p^{(0)}}[\prod_{i\in J} p^{(2)}_i] & = & 
\displaystyle
\E_{p^{(1)}|p^{(0)}}\left[ \E_{p^{(2)}|p^{(1)}} [\prod_{i\in J} p^{(2)}_i]\right] \\
& \leq & 
\displaystyle
\E_{p^{(1)}|p^{(0)}} [\prod_{i\in J} p^{(1)}_i] \leq \prod_{i\in J} p^{(0)}_i
\end{array}$
\end{proof}

\section{Multiple ranges in a hierarchy} 
\label{mrangeH:sec}
\notinproc{
\medskip
\noindent
{\bf Multi-range queries.}
We now bound the discrepancy for a query that is a union of
several ranges: 
}
\onlyinproc{
We show that
the discrepancy of a query that spans multiple ranges in a hierarchy
is bounded by the number of ranges. }

 \begin{lemma} \label{hierarcyunionbound:lemma}
Let $Q$ be a union of $\ell$ disjoint ranges $R_1,\ldots,R_\ell$ on a
hierarchy structure. 
The discrepancy is at most $\ell$ and is distributed
as the error of a \varopt\ sample on a subset of size
$\mu=\sum_{h=1}^\ell (p(R_h)-\lfloor p(R_h)\rfloor)\leq \ell$.
\end{lemma}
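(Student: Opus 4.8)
The plan is to reduce the multi-range discrepancy to a sum of per-range rounding decisions and then show that these decisions jointly behave exactly like the inclusion indicators of a \varopt\ sample, at which point both claims follow.

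First I would exploit disjointness. Since the $R_h$ are disjoint, $|S\cap Q|=\sum_{h=1}^\ell |S\cap R_h|$ and $p(Q)=\sum_{h=1}^\ell p(R_h)$, so the signed discrepancy equals $\sum_{h=1}^\ell(|S\cap R_h|-p(R_h))$. By the single-range guarantee already established for the hierarchy rule, $|S\cap R_h|\in\{\lfloor p(R_h)\rfloor,\lceil p(R_h)\rceil\}$. Writing $f_h=p(R_h)-\lfloor p(R_h)\rfloor$ and letting $Y_h\in\{0,1\}$ indicate that $R_h$ is rounded up, each term is $Y_h-f_h$, so the discrepancy equals $|\sum_h Y_h-\mu|$ with $\mu=\sum_h f_h$. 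Unbiasedness of the \HT\ estimator gives $\E[Y_h]=f_h$. Since $0\le\sum_h Y_h\le\ell$ and $0\le\mu\le\ell$, the discrepancy is at most $\ell$, which is the first claim; and $\sum_h Y_h-\mu$ is literally the sampling error on $\ell$ items with inclusion probabilities $f_h$, i.e. a subset of size $\mu$.

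It remains to show $(Y_1,\dots,Y_\ell)$ has the distribution of the inclusion vector of a \varopt\ sample with marginals $f_h$, i.e. that beyond $\E[Y_h]=f_h$ it satisfies $\E[\prod_{h\in J}Y_h]\le\prod_{h\in J}f_h$ and $\E[\prod_{h\in J}(1-Y_h)]\le\prod_{h\in J}(1-f_h)$ for every $J\subseteq\{1,\dots,\ell\}$; the tail bounds then transfer verbatim with this $\mu$. The structural fact I would use is that the lowest-$\LCA$ rule forces all aggregations internal to a range $R_h$ (whose $\LCA$ is at or below the node defining $R_h$) to precede any aggregation crossing its boundary (whose $\LCA$ is a strict ancestor). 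Hence each $R_h$ is first reduced to a single unset ``representative'' key $r_h$ whose value is exactly $f_h$ (the intra-range mass $\sum_{i\in R_h}p_i$ is invariant under internal aggregations), and $Y_h$ is precisely the terminal value of $r_h$. Conditioning on an intermediate vector $q$ in which each range carries its representative, the remaining aggregations form a probabilistic aggregation of $q$; by transitivity (Lemma~\ref{lemma:a}) the terminal vector is a probabilistic aggregate of $q$, so applying conditions (I) and (E) to the representative coordinates gives $\E[\prod_{h\in J}Y_h\mid q]\le\prod_{h\in J}q_{r_h}=\prod_{h\in J}f_h$ and the analogous exclusion bound. Because these bounds do not depend on the realization of $q$, averaging over $q$ yields the unconditional inclusion/exclusion inequalities, establishing the \varopt\ property.

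The main obstacle is making the ``intermediate vector $q$'' rigorous: the lowest-$\LCA$ rule only totally orders $\LCA$s that are comparable in the tree, so two disjoint ranges in different branches may have their representatives created and consumed at different times, and there need not be a single iteration at which all $\ell$ representatives coexist. I would handle this either by showing that every boundary-crossing aggregation touching $r_h$ has as its other endpoint a key lying outside every still-unresolved range nested below its $\LCA$, so that the aggregations among representatives assemble into a valid reduced probabilistic-aggregation sequence on the $f_h$; or, more robustly, by applying the transitivity argument relative to each representative's own creation time and chaining the conditional bounds, using that the targets $\prod_{h\in J}f_h$ are deterministic and hence survive the nested conditioning. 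Ranges with $f_h=0$ are resolved deterministically to $Y_h=0$ and can simply be dropped from every $J$.
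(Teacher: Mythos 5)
Your proposal is correct and follows essentially the same route as the paper: both reduce each range $R_h$ to a ``leftover'' $0/1$ variable with success probability $p(R_h)-\lfloor p(R_h)\rfloor$ and then observe that the aggregations above the truncated level constitute a hierarchy summarization (hence a probabilistic aggregation, hence \varopt) on these leftovers, which yields both the bound $\ell$ and the stated error distribution with $\mu=\sum_h f_h$. The only difference is that you explicitly confront the synchronization issue of when the representatives coexist, which the paper dismisses with ``it follows directly from our construction''; your treatment is, if anything, more careful than the original.
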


\begin{proof}
Consider a truncated hierarchy where 
the nodes $R_1,\ldots,R_\ell$ are leaves.  Include other nodes to form
a set $L'$ of disjoint nodes which covers all original leaves.  For
each leaf node in the truncated hierarchy $R_h\in L'$ consider a
corresponding ``leftover'' 0/1 random variable with 
probabilities $p(R_h)-\lfloor p(R_h)\rfloor$: 
its value is 1 if the range $R_h$ has $\lceil R_h \rceil$ samples, 
and its value is $0$ if there are
$\lfloor R_h \rfloor$ samples. 
It follows directly from our construction that 
the sample with respect to these leftovers is a \varopt\ sample, since
the original summarization from $L'$ up proceeds like a hierarchy
summarization, treating the leftovers as leaves. 
\end{proof}

Applying Chernoff bounds, we obtain that 
the discrepancy on $Q$ is at most $\sqrt{\ell}$ with high probability.

\section{Order Structures} \label{ordered:sec}
\noindent
Recall that order structures consist of all intervals of keys.
\notinproc{
\begin{theorem} \label{orderoffline:thm}
For the order structure (all intervals of ordered keys), 
there always exists a \varopt\ sample
distribution with maximum range discrepancy $\Delta\leq 2$.  For any
fixed $\Delta<2$, there exist inputs for which a \varopt\ sample
distribution with maximum range discrepancy $\leq \Delta$ does not
exist. 
\end{theorem}

We establish the positive part of the Theorem by presenting 
an efficient summarization algorithm.
Appendix~\ref{negorderoffline} shows the negative part 
 by demonstrating a family of hard input instances.

 \medskip
}\onlyinproc{
\begin{theorem}[Theorem \ref{orderoffline:thm} restated]
For the order structure
(i) there always exists a \varopt\ sample
distribution with maximum range discrepancy $\Delta\leq 2$.  
(ii) For any
fixed $\Delta<2$, there exist inputs for which a \varopt\ sample
distribution with maximum range discrepancy $\leq \Delta$ does not
exist. 
\end{theorem}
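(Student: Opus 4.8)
The plan is to analyze both directions through the \emph{prefix-error walk}. For a sample $S$ set $e_k = |S\cap[1,k]| - \sum_{i\le k}p_i$, so that $e_0=e_n=0$ and, for every interval $[a,b]$, $|S\cap[a,b]| - p([a,b]) = e_b - e_{a-1}$. Hence the maximum range discrepancy of $S$ equals $\max_{0\le i<j\le n}|e_j-e_i| = (\max_k e_k)-(\min_k e_k)$, the total span of this walk. This reformulation drives everything: a \varopt\ distribution has maximum range discrepancy $\le\Delta$ iff every sample in its support induces a walk of span $\le\Delta$, and the increments $e_k-e_{k-1}=X_k-p_k$ lie strictly in $(-1,1)$.

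For part (i) I would apply the hierarchy construction of Section~\ref{drandhierarchy:sec} to the degenerate ``path'' hierarchy whose internal nodes are exactly the prefixes $[1,k]$ (prefixes are nested, so this is literally one of the one-dimensional special cases already handled). That construction yields a \varopt\ sample in which every prefix has discrepancy $<1$, i.e.\ $|e_k|<1$ for all $k$. Then for any interval $|e_b-e_{a-1}|\le|e_b|+|e_{a-1}|<2$, giving $\Delta<2$ and settling the existence claim. The only point needing care is verifying that the $\Delta<1$ guarantee for prefixes really transfers to arbitrary intervals only with the weaker bound $2$, which is exactly the triangle inequality above and cannot in general be improved.

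For part (ii) I would argue by infeasibility: fix $\Delta<2$ and exhibit an instance for which no distribution supported on span-$\le\Delta$ (``good'') samples can meet the IPPS marginals while respecting the inclusion bound~(I), $\E[\prod_{i\in J}X_i]\le\prod_{i\in J}p_i$. The engine is that (I) applied to $J=T$ caps each sample: $\Pr[S=T]\le\prod_{i\in T}p_i$, so realizing a marginal $p_i$ forces key $i$'s mass to be spread over many good samples; if too few good samples contain $i$ (or they over-correlate some pair beyond $\prod p$), we get a contradiction and a span-$>\Delta$ sample must occur. Concretely I would build the instance from many low-probability keys with a small fixed size $s$, so that the ``good gap'' window of partner keys (those $j$ with $p([1,j])\setminus p([1,i])$ in a band forcing span $\le\Delta$) becomes too narrow to carry the marginal of the extreme keys; letting $s$ and the key count grow then pushes the forced span toward $2$.

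The main obstacle is precisely this construction. A naive union-bound forcing of a single extreme sample can never exceed span $1$ (forcing a sample that touches keys whose total deviation mass is $D$ costs probability $\approx 1-D$ and yields span $\le D<1$); the argument must instead exploit the \varopt\ inclusion/exclusion bounds to \emph{over-constrain} the good samples, as already happens on small instances---e.g.\ keys with $p_i=2/n$ and $s=2$, where confining to span-$<\tfrac32$ pairs forces some pair probability above $p_ip_j$ and thus violates~(I). Turning this local violation into a family whose forced span tends to $2$, and proving the associated linear program infeasible for every $\Delta<2$, is the delicate step; the matching hard instances should come from making the extreme (boundary) keys the bottleneck, since they can only pair ``inward'' and therefore run out of low-span partners first.
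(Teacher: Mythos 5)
Your part (i) is correct and is essentially the paper's own proof: the paper's {\sc OSsummarize} is exactly the hierarchy algorithm run on the ``path'' hierarchy whose internal nodes are the prefixes $[1,k]$, giving prefix discrepancy $|e_k|<1$ for every $k$, and an arbitrary interval is then handled as the difference of two prefixes --- your triangle inequality. Nothing to add there.

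Part (ii) is where there is a genuine gap. You correctly identify the right raw material (many keys with uniformly tiny $p_i$; a naive ``force one bad sample'' argument tops out at span $1$; the \varopt\ inclusion/exclusion bounds must do the work), but you explicitly stop at ``turning this local violation into a family whose forced span tends to $2$ \ldots is the delicate step,'' and that step \emph{is} the lower bound. The paper does not set up a linear program or exhibit a pair with $\Pr[i,j\in S]>p_ip_j$; it runs an iterated drift argument. Fix $\Delta=2-1/m$, take all $p_i=\epsilon\ll 1/(4m)$ with total mass $\ge 5m$, and let $s(i_\ell)$ denote the prefix mass at the $\ell$-th sampled key. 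Applying the discrepancy bound to the two half-open intervals straddling consecutive sampled keys pins every step into the window $s(i_{j+1})-s(i_j)\in(2-\Delta-\epsilon,\ \Delta+\epsilon)$. The upper portion of that window has length $1-\epsilon$ and hence carries probability mass less than $1$, so (since each key's conditional inclusion probability is bounded by its tiny marginal) with probability bounded away from zero the next sampled key lands in the lower portion, i.e.\ the step is at most $1-1/(2m)$. Iterating over $k$ steps, with positive probability $s(i_k)<s(i_1)+(k-1)\bigl(1-1/(2m)\bigr)$, while the discrepancy bound applied to the single long interval from $i_1$ to $i_k$ forces $s(i_k)\ge s(i_1)+k-2+1/m$ on \emph{every} sample in the support; at $k=4m$ these are incompatible. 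The idea your sketch is missing is precisely this accumulation: no single range and no single pair of keys yields a contradiction for $\Delta$ close to $2$, but each of $\Theta(m)$ consecutive steps independently falls short by $\Theta(1/m)$ with constant probability, and the shortfalls add up against a hard deterministic constraint on one long range. As written, your part (ii) is a plausible plan, not a proof.
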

}

\noindent
{\bf Order Structure Summarization.}
 To gain intuition, first consider inputs where
there is a partition $\cL$ of the ordered keys into 
non-overlapping intervals such that for each interval
$J\in\cL$,  $p(J)=\sum_{i\in J} p_i \equiv 1$, i.e. the initial
probabilities sum to 1.
In this case, there are \varopt\ samples which pick one key from each 
interval $J \in \cL$. 
Now observe that any query interval $R$ covers some number
of full unit intervals.
The only discrepancy comes from the at most 2 end intervals, and so
the maximum range discrepancy is bounded by the probability mass
therein,  $\Delta < 2$. 
Therefore, this sample has maximum range discrepancy $\Delta< 2$.  

The general case is handled by
{\sc OSsummarize}$(p_1,\ldots,p_n)$.
(Algorithm~\ref{orderoffline:alg}). 
Without loss of generality, key $i$ is the $i$th key in the sorted
order, and $p_i$ is its inclusion probability.
The algorithm processes the keys in sorted order, maintaining 
an active key $a$ that is the only key that is not set 
from the prefix processed so far.
 At any step, if there is an active key $a$, the selected pair for the
 aggregation consists of $a$ and the current key.  
Otherwise, the aggregation involves the current and the next key.
The final sample $S$ is the set of keys with  $p_i=1$.  
We now argue it has bounded discrepancy.

\begin{proof}[of Theorem \ref{orderoffline:thm} (i)]
The algorithm can be viewed as a special case of a hierarchy summarization
algorithm where the ordered keys are arranged as a  
hierarchy which is a long path with a single leaf hanging out of each
path node.   
The internal nodes in the hierarchy correspond to prefixes of the
sorted order, and thus they are estimated optimally
(the number of samples is the floor or ceiling of the expectation):
For any $i$, 
the number of members of $S$ amongst the first $i$ keys is
$$|S\cap[1,i]|\in \{\lfloor \sum_{j\leq i} p_i \rfloor,\lceil
\sum_{j\leq i} p_i \rceil\}\ .$$ 
For a key range $R=[i_1,i_2]$ that is not a prefix ($i_2\geq i_1>1$),
we can express it as the difference of two prefixes:
{\small
\begin{eqnarray*}
\lefteqn{|S\cap R| = |S\cap [1,i_2]| - |S\cap [1,i_1-1]|}\\
& \leq & \lceil \sum_{j\leq i_2} p_i \rceil - \lfloor \sum_{j\leq i_1-1} p_i \rfloor \leq 1+\sum_{j\leq i_2} p_i - (-1+ \sum_{j\leq i_1-1} p_i) \\
& = & 2+\sum_{i_1\leq j \leq i_2}p_i\ .\\
 & \geq & \lfloor \sum_{j\leq i_2} p_i \rfloor - \lceil \sum_{j\leq i_1-1} p_i \rceil \geq -1+\sum_{j\leq i_2} p_i - (1+ \sum_{j\leq i_1-1} p_i) \\
 & =  & -2+\sum_{i_1\leq j \leq i_2}p_i\ .
\end{eqnarray*}}
\noindent
Hence the maximum discrepancy is at most 2, as claimed. 
\end{proof}

\noindent
{\bf Summaries with Smaller discrepancy.}
  Requiring the summary to be \varopt\ means that it may not be
feasible to guarantee $\Delta$ {\em strictly} less than 2. 
We can, however, obtain a {\em deterministic} set with maximum range 
discrepancy $\Delta < 1$:
 Associate key $i$ with the
interval $H_i=(\sum_{j< i} p_j, \sum_{j\leq i} p_j]$ on the positive axis (with respect to the original vector of inclusion probabilities) and simply 
include in $S$ all keys where the $H_i$ interval contains an integer.
In fact, we can obtain a sample which satisfies the
\varopt\ requirements (i) and (ii) but not (iii) with $\Delta < 1$:
Uniformly pick $\alpha\in [0,1]$
and store all keys $i$ so that $h+\alpha \in H_i$ for each integer $h$.
 This sampling method is known as {\em systematic sampling}~\cite{SSW92}. 
Systematic samples, however, suffer from positive correlations which mean
that estimates on some subsets have high variance (and Chernoff tail
bounds do not apply). 

\label{negorderoffline}

\begin{algorithm}[t]
\caption{{\sc OSsummarize}$(p_1,\ldots,p_n)$}\label{orderoffline:alg}
\begin{algorithmic}[1]
\State $a \leftarrow 1$ \Comment{leftover key}
\State $i = 2$ \Comment{current  key}

\While {$i \leq n$} \Comment{left to right scan of keys}
  \While {$p_a = 1$ and $a< n$}
    \State  $a$++;
  \EndWhile

  \Statex
  \State $i\leftarrow a+1$
  \While {$p_i =1$ and $i< n$}
    \State $i$++;
  \EndWhile

  \Statex

  \If {$p_a < 1$ and $p_i < 1$}
    \State {\sc Pair-Aggregate}$(a,i)$
    \If {$p_a = 1$ or $p_a = 0$} \Comment{$p_a$ is set}
      \State $a \leftarrow i$ \Comment{$i$ is the new leftover key}
    \EndIf
    \State $i$++
  \EndIf 
\EndWhile
\end{algorithmic}
\end{algorithm}

\medskip
\noindent
{\bf Lower bound on discrepancy.}
We show that there are ordered weighted sets 
for which we can not obtain a \varopt\ summary with 
maximum range discrepancy $\Delta < 2$.

\begin{proof}[of Theorem~\ref{orderoffline:thm} (ii)]

For any positive integer $m$, 
we show that for some sequence, there is no \varopt\ sample with
$\Delta \leq 2-1/m$.

We use a sequence where $p_i= \epsilon \ll 1/(4m)$ and $\sum_i p_i \geq 5m$.
Let
$i_1<i_2<i_3,\cdots$ be the included keys, sorted by order.
With each key $i_\ell$ we associate the position 
$s(i_\ell)=\sum_{j\leq i_\ell} p_j$.

 We give a proof by contradiction.  
Consider keys $i_\ell,i_j$ with $\ell<j$.
If an interval contains at most $h$ sampled keys, it must be of size
at most $h+\Delta$.
If an interval contains at least $h$ sampled keys,  it must be of
size at least $h-\Delta$.

The interval $[s(i_\ell)-\epsilon,s(i_j)]$, which is of size
$s(i_j)-s(i_\ell)+\epsilon$, 
contains $j-\ell+1$ sampled keys.

We obtain that $s(i_j)-s(i_\ell)+\epsilon \geq j-\ell+1-\Delta$.
Rearranging, $s(i_j) \geq s(i_\ell)-\epsilon-\Delta+1+j-\ell$.

The interval $(s(i_\ell),s(i_j)-\epsilon)$, which is of size
$s(i_j)-s(i_\ell)-\epsilon$, contains 
$j-\ell-1$ sampled keys.
Hence,
$s(i_j)-s(i_\ell)-\epsilon \leq j-\ell-1+\Delta$.
Rearranging,
$s(i_j) \leq s(i_\ell)+\epsilon+j-\ell+\Delta-1$.

 From the above, 
for  $j>\ell$, we have
\begin{equation} \label{contrell}
 s(i_j) \in (s(i_\ell)+j-\ell-\Delta+1-\epsilon,s(i_\ell)+j-\ell+\Delta+1+\epsilon)\ .
\end{equation}

For $j\geq 2$, 
\begin{equation} \label{contr1}
s(i_j) \in (s(i_1)-\epsilon+j-\Delta,s(i_1)+j-2+\Delta+\epsilon)\ .
\end{equation}

 Fixing the first $j$ included keys, $i_1,\ldots,i_j$, 
the conditional probability on $i_{j+1}=h$ is at most $p_h$.
We have
$s(i_{j+1})\in (s(i_j)+2-\Delta-\epsilon, s(i_j)+\Delta+\epsilon)$.
Therefore, there must be a positive probability for the event
$$s(i_{j+1})< s(i_j)+\Delta+\epsilon-(1-\epsilon)= s(i_j)+1-1/m+2\epsilon$$
which is contained in the event
$s(i_{j+1})  \leq s(i_j)+1-1/(2m)$.

 Iterating this argument, we obtain that for all $k>1$,
we must have positive probability
for $s(i_k) < s(i_1) + (k-1)(1-1/(2m))=s(i_1) +k-1 -(k-1)/2m$.  
From (\ref{contr1}) 
we have $s(i_k)\geq s(i_1) + k-1 -(1-1/m)=i_1+k-2+1/m$.
Taking $k=4m$, we get a contradiction.
\end{proof}

\section{Analysis of KD-hierarchy} \label{KDanal:sec}
\eat{
\begin{figure}[t]
\centering
\includegraphics[width=0.8\columnwidth]{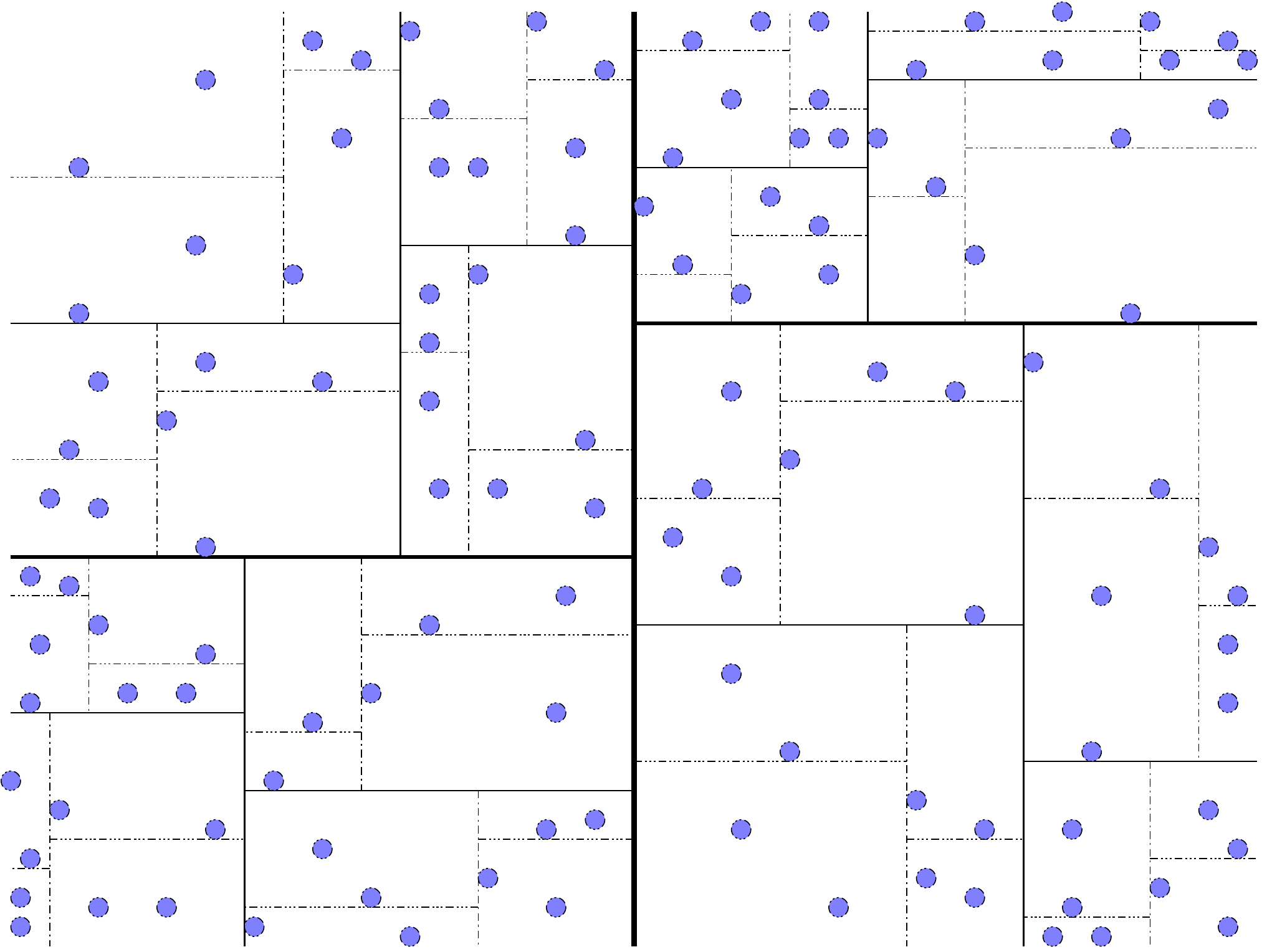} \\
\includegraphics[width=0.8\columnwidth]{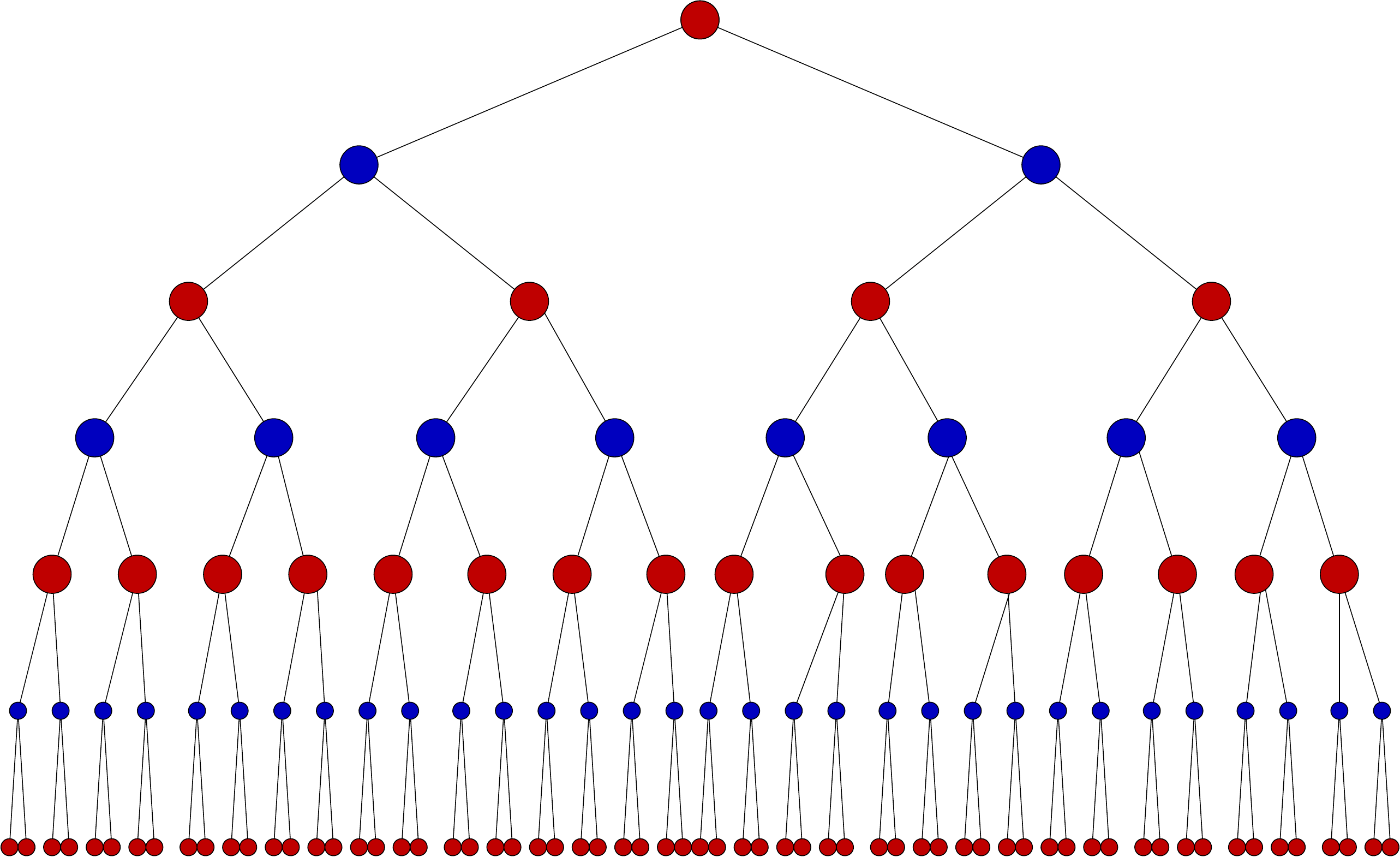}
\caption{KD Hierarchical partition of two dimensional data}
\label{fig:kdpart}
\end{figure}

\begin{figure}[t]
\centering
\includegraphics[width=0.8\columnwidth]{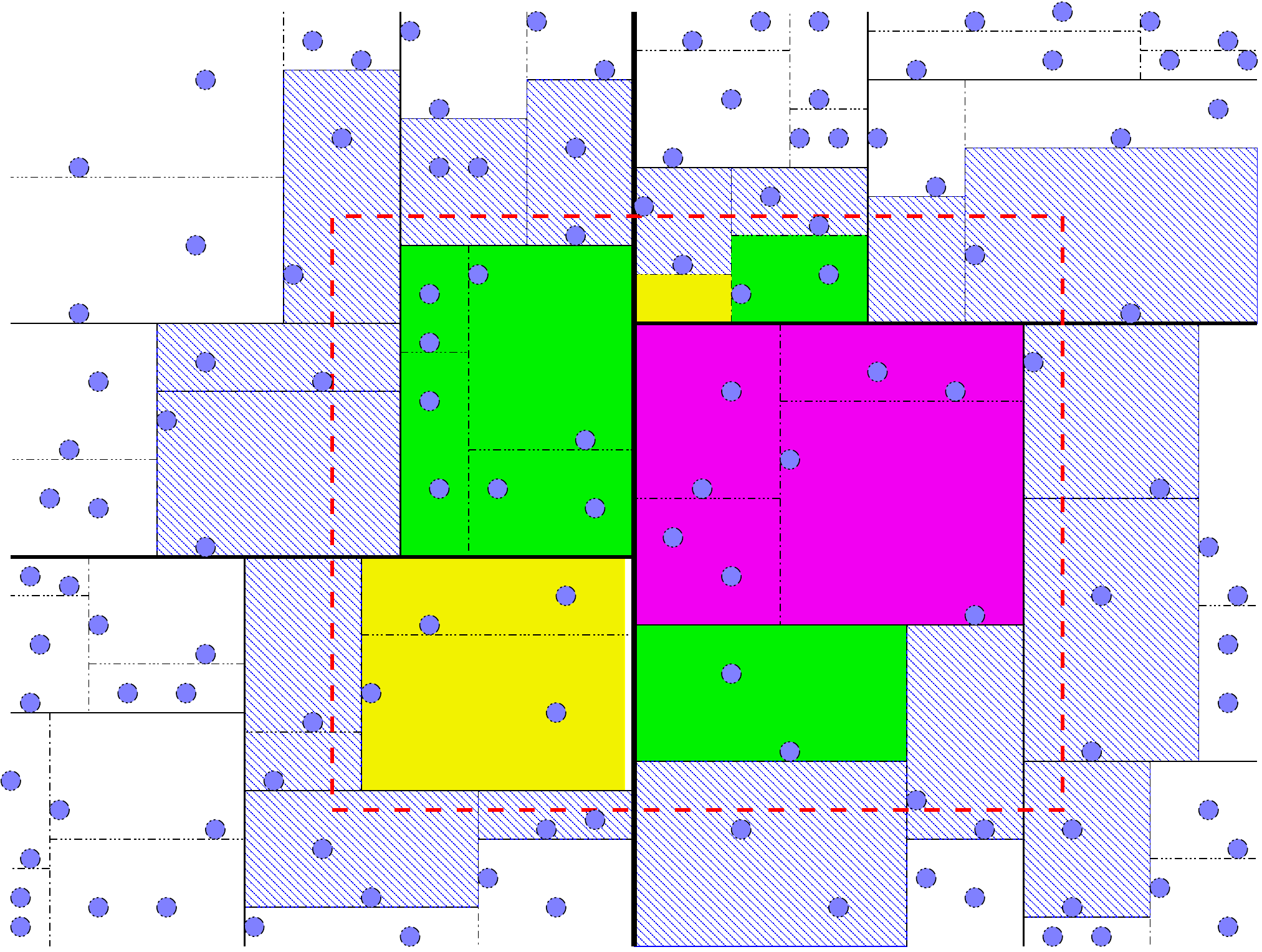} \\
\includegraphics[width=0.8\columnwidth]{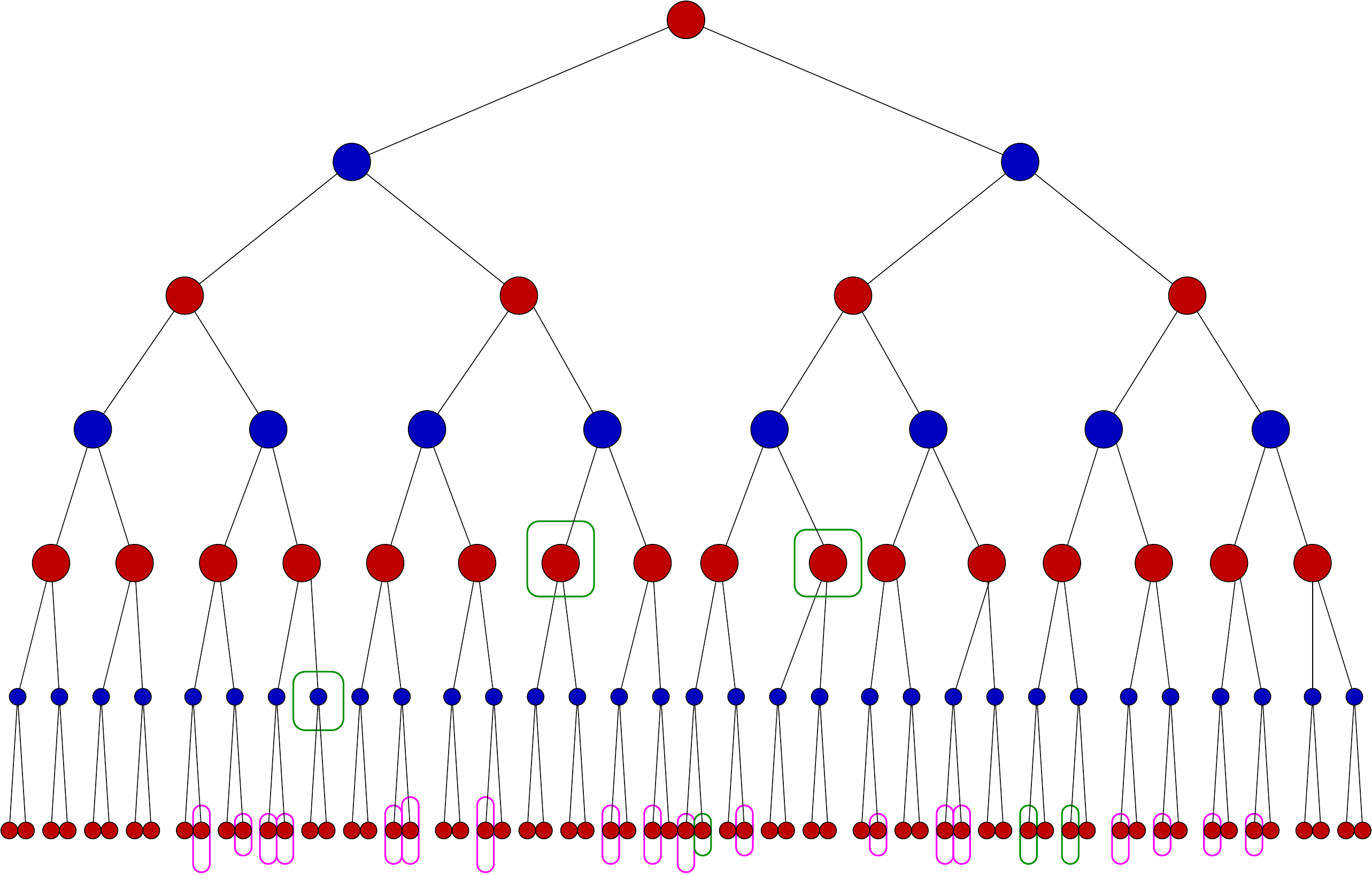}
\caption{Query rectangle on a hierarchical partition}
\label{fig:kdpart_query}
\end{figure}
}
\begin{figure}[t]
\centering
\begin{tabular}{ccc}
\includegraphics[width=0.45\columnwidth]{kd2d} & $\quad\, $ &
\includegraphics[width=0.45\columnwidth]{kd2d_Q}\\
\subfigure[Partition]{\includegraphics[width=0.45\columnwidth]{kd2d_tree}\label{fig:kdpart}} & $\quad\, $ &
\subfigure[Query rectangle]{\includegraphics[width=0.45\columnwidth]{kd2d_Q_tree}\label{fig:kdpart_query}}
\end{tabular}
\caption{KD Hierarchical partition of two dimensional data}
\label{fig:kdpart_all}
\end{figure}

The {\sc KD-hierarchy} algorithm (Algorithm \ref{kdhierarchy:alg})
aims to minimize the discrepancy within a product space. 
Figure~\ref{fig:kdpart_all}\ref{fig:kdpart} shows a two-dimensional set of $64$ keys that
are uniformly weighted, with sampling probabilities $1/2$, and the
corresponding kd-tree: a balanced binary tree of depth 6.  
The cuts alternate vertically (red tree nodes) and horizontally
(blue nodes).  Right-hand children in tree correspond to right/upper
parts of cuts and left-hand children to left/lower parts.

We now analyze the resulting summary, based on the properties of the
space partitioning performed by the kd-tree.
We use $v$ to refer interchangeably to a node in the tree and the  
hyperrectangle induced by node $v$ in the tree.  
A node $v$ at depth $d$ in the tree has probability mass
  $p(v)\leq s/2^d+2$.
 We refer to the set of minimum depth nodes that satisfy $p(v)\leq 1$
 as s-leaves (for {\em super leaves})
($v$ is an s-leaf iff $p(v)\leq 1$ and its immediate ancestor
 $a(v)$ has $p(a(v))>1$.  
The depth of an s-leaf (and of the hierarchy when truncated at
s-leaves) is at most $D=2+\lceil \log_2 s \rceil$.
\begin{lemma}
Any axis-parallel 
hyperplane cuts $O(s^{\frac{d-1}{d}})$ s-leaves.  
\label{lem:bsize}
\end{lemma}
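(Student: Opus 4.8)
The plan is to track how a fixed hyperplane propagates down the kd-tree and to exploit the fact that the splitting axis cycles through the $d$ coordinates. Fix an axis-parallel hyperplane $H = \{x : x_a = c\}$ and call a tree node \emph{stabbed} if $H$ meets the interior of its hyperrectangle. Since each child's rectangle is contained in its parent's, the stabbed nodes form a subtree containing the root (if $H$ misses the root the lemma is trivial). The key geometric observation is a dichotomy at a node that splits on axis $a'$: if $a' = a$, then $H$ lies entirely on one side of the cut, so at most one child is stabbed; if $a' \ne a$, then $H$ may cross the cut and both children can be stabbed. Boundary cases (where $H$ coincides with a cut and so stabs the interior of neither child) only help the bound.

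Since every node at depth $\ell$ splits on axis $\ell \bmod d$, this dichotomy yields a clean recursion for $N(\ell)$, the number of stabbed nodes at depth $\ell$: we have $N(0)=1$, and $N(\ell+1)\le N(\ell)$ when $\ell \bmod d = a$, while $N(\ell+1)\le 2N(\ell)$ otherwise. Among the depths $0,\dots,\ell-1$ the residue $a$ occurs at least $\lfloor \ell/d\rfloor$ times, so doubling happens at most $\ell-\lfloor \ell/d\rfloor \le \ell(d-1)/d+1$ times, giving $N(\ell)\le 2^{\ell(d-1)/d+1}$. This is where the cyclic splitting pays off: a $1/d$ fraction of the levels are ``non-branching'' for $H$, which is exactly what converts $2^\ell$ into $2^{\ell(d-1)/d}$.

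To finish, I would bound the number of stabbed s-leaves by summing over their depth. Every s-leaf sits at depth at most $D = 2 + \lceil \log_2 s\rceil$ (established earlier), and the stabbed s-leaves at a given depth $\ell$ form a subset of the $N(\ell)$ stabbed nodes at that depth. Since each s-leaf has a unique depth, there is no double counting, so the total number of stabbed s-leaves is at most $\sum_{\ell=0}^{D} N(\ell) \le \sum_{\ell=0}^{D} 2^{\ell(d-1)/d+1}$. This is a geometric series with ratio $2^{(d-1)/d}>1$, hence dominated, up to a factor depending only on $d$, by its top term $2^{D(d-1)/d}$. Substituting $D=2+\lceil \log_2 s\rceil \le 3 + \log_2 s$ gives $2^{D(d-1)/d} \le 2^{3(d-1)/d}\, s^{(d-1)/d} = O(s^{(d-1)/d})$, as required.

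The main obstacle is not the computation but getting the propagation dichotomy and the resulting recursion exactly right, and then recognizing that s-leaves, although spread across $\Theta(\log s)$ distinct depths, can still be bounded by the single deepest level --- i.e.\ that the depth sum is a geometric series dominated by its largest term at depth $D$. The essential point, worth stating explicitly, is that it is precisely the cyclic choice of splitting axes in \textsc{KD-hierarchy} (so that the $H$-aligned axis is cut a $1/d$ fraction of the time) together with the $O(\log s)$ depth bound on s-leaves that produces the exponent $(d-1)/d$.
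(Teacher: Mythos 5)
Your proof is correct and takes essentially the same route as the paper's: the number of stabbed nodes can at most double only on the $\tfrac{d-1}{d}$ fraction of levels whose split axis differs from the normal of the hyperplane (splits on the matching axis leave the hyperplane entirely on one side), so the count at depth $D=2+\lceil\log_2 s\rceil$ is $2^{\frac{d-1}{d}D}=O(s^{\frac{d-1}{d}})$. Your only addition is the explicit geometric-series sum over depths to account for s-leaves sitting at different levels, a point the paper's proof elides but which does not change the argument.
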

\begin{proof}
Consider the hierarchy level-by-level, top to bottom.  
Each level where the axis was not perpendicular to the hyperplane at most
 doubles the number of nodes that intersect the hyperplane. 
Levels where the partition axis is perpendicular to the hyperplane do not
increase the number of intersecting nodes.
Because axes were used in a round-robin fashion, 
the fraction of levels that can double the number of intersecting nodes 
is $\frac{d-1}{d}$.
Hence, when we reach the s-leaf level, 
the number of intersecting nodes is at most  
$2^{\frac{d-1}{d} D} = O(s^{\frac{d-1}{d}})$.
\end{proof}
An immediate corollary is that
the boundary of an axis-parallel box $R$ may intersect at most $2d
s^{\frac{d-1}{d}}$ s-leaves. We denote by $B(R)$ this set of
boundary s-leaves. 
 Let $U(R)$ be a minimum size collection of nodes in the hierarchy 
such that no internal node contains a leaf from $B(R)$. 
Informally, $U(R)$ consists of the (maximal) hyperrectangles which
are fully contained in $R$ or fully disjoint from $R$.
Figure~\ref{fig:kdpart_all}\ref{fig:kdpart_query} illustrates a query rectangle $R$ (dotted red line) over the data set.
The maximal interior nodes contained in $R$  
($v\in U(R) | v\subset R$) are marked in solid
colors (and green circles in the tree layout) and the boundary s-leaves $B(R)$ in light stripes (magenta circles in the tree layouts).
For example, the magenta rectangle corresponds to the R-L-L-R path.
\begin{lemma}
\label{lem:usize}
The size of $U(R)$ is at most
 $O(d s^{\frac{d-1}{d}}\log s)$.
\end{lemma}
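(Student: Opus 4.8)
The plan is to charge each node of $U(R)$ to the boundary s-leaves $B(R)$, exploiting that $U(R)$ lives just ``one step off'' the straddling part of the hierarchy and that the hierarchy truncated at s-leaves has depth only $D = 2 + \lceil \log_2 s \rceil$. First I would recast $U(R)$: by its definition (and the informal description) $U(R)$ is the antichain of \emph{maximal} nodes whose rectangle is either fully contained in $R$ or fully disjoint from $R$; equivalently, $U(R)$ is the set of leaves obtained by pruning the hierarchy as soon as a node stops \emph{straddling} $R$, i.e. stops meeting both $R$ and its complement. The key structural observation is then that for every $u \in U(R)$ its parent $a(u)$ must straddle $R$ (otherwise $u$ would not be maximal), so the boundary $\partial R$ passes through the rectangle of $a(u)$.

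Next I would set up the charging argument. Since $\partial R$ meets $a(u)$ and the s-leaves under $a(u)$ tile its rectangle, following $\partial R$ downward reaches at least one s-leaf $\beta$ that $\partial R$ meets, i.e. $\beta \in B(R)$, with $\beta$ a descendant of $a(u)$. Because $u$ itself is fully inside or fully outside $R$ it can contain no boundary s-leaf, so $\beta$ must descend from the \emph{sibling} of $u$ rather than from $u$; equivalently, $u$ is the sibling of some ancestor of $\beta$. This yields the inclusion $U(R) \subseteq \bigcup_{\beta \in B(R)} \{\, u : u \text{ is a sibling of an ancestor of } \beta \,\}$.

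The counting is then immediate: every $\beta \in B(R)$ has depth at most $D = 2 + \lceil \log_2 s \rceil$, hence at most $D$ ancestors and thus at most $D$ candidate siblings. Combining with the corollary of Lemma~\ref{lem:bsize}, namely $|B(R)| \le 2d\, s^{\frac{d-1}{d}}$, gives $|U(R)| \le |B(R)|\cdot D = O(d\, s^{\frac{d-1}{d}}\log s)$, as claimed.

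The main obstacle is the geometric bookkeeping rather than the counting: I must make the step ``a straddling parent has a boundary s-leaf descendant'' airtight, in particular the degenerate case where $\partial R$ aligns with hierarchy cut planes, so that $\partial R$ touches s-leaves only along their faces. Reading ``intersect'' in the definition of $B(R)$ as closed-set intersection makes even these touching s-leaves members of $B(R)$, which is exactly what the argument needs. I would also remark that the $\log s$ factor is slack: bounding the straddling nodes level by level gives $O(2^{\frac{d-1}{d}\ell})$ per level $\ell$, whose geometric sum is dominated by the bottom level and already yields $O(d\, s^{\frac{d-1}{d}})$; the cruder estimate of $O(d\, s^{\frac{d-1}{d}})$ at each of the $D = O(\log s)$ levels is what produces, and suffices for, the stated bound.
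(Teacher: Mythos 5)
Your proposal is correct and follows essentially the same route as the paper: both arguments reduce to the observation that every node of $U(R)$ is the sibling of an ancestor of some boundary s-leaf, and then bound $|U(R)| \le |B(R)| \cdot D$ with $D = O(\log s)$ and $|B(R)| = O(d\, s^{\frac{d-1}{d}})$ from Lemma~\ref{lem:bsize}. The only cosmetic difference is that the paper justifies the sibling property via the minimality of $U(R)$ (two siblings neither of which has a descendant in $B(R)$ could be merged into their parent), whereas you derive it geometrically from the fact that the parent must straddle $\partial R$; your closing remark that the $\log s$ factor is removable by a level-by-level geometric sum is a valid bonus not present in the paper.
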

\begin{proof}
  Each node in $U$ must have a 
sibling such that the sibling, or some of its descendants, are in $B(R)$.
If this is not the case, then the two siblings can be replaced by their parent, decreasing the size of $U(R)$, which contradicts its minimality.
We bound the size of $U(R)$ by bounding the number of potential siblings.
The number of ancestors of each boundary leaf is at most the depth 
which is $\leq 2+\lceil \log_2 s\rceil$.  Thus, 
the number of potential siblings is at most the number of boundary leaves times the depth.  By substituting a bound on $|B(R)|$, we obtain the stated upper bound.
\end{proof}
\noindent
These lemmas allow us to bound the estimation error, by applying 
Lemma~\ref{hierarcyunionbound:lemma}.
That is, 
for each $v\in U(R)$ such that $v\subset R$ we have a 0/1 random
variable that is 1 with probability $p_v-\lfloor p_v \rfloor$  and is
$0$ otherwise 
(The value is $0$ if $v$ includes  
$\lfloor p_v \rfloor$ samples and $1$ otherwise).
 For each $v\in B(R)$, we have a random variable that is 1 with
 probability $p(v\cap R)$.  
This is the probability that $S$ contains one key from $v\cap R$ ($S$ can contain at most one key from each s-leaf). 
The sample is \varopt\ over these  random variables with

\vspace*{-1ex}
{\small
$$\mu = \sum_{v\in U(R) | v\subset R} (p(v)-\lfloor p(v) \rfloor) + \sum_{v\in B(R)} p(v\cap R) \leq |U(R)|+|B(R)|\ .$$}

\noindent
Substituting our bounds on $|U(R)|$ and $|B(R)|$ from Lemmas
\ref{lem:bsize} and \ref{lem:usize} gives  accuracy bound
claimed at the start of the section. 


\end{document}